\newcommand{\SA}[0]{\mathsf{SA}} 
\newcommand{\ST}[0]{\mathcal{ST}} 
\newcommand{\SF}[1]{T^{#1}} 
\newcommand{\pred}{\textsc{Pred}} 
\theoremstyle{plain}
\newtheorem{definition}{Definition}
\newtheorem{lemma}[definition]{Lemma}
\newtheorem{prop}[definition]{Proposition}
\newtheorem{theorem}[definition]{Theorem}
\newtheorem{observation}[definition]{Observation}
\newtheorem{corollary}[definition]{Corollary}
\theoremstyle{remark}
\definecolor{myRed}{rgb}{0.9,0.3,0.1}
\definecolor{myYellow}{rgb}{0.99,0.99,0.05}
\title{Alphabet-Dependent String Searching with \\Wexponential Search Trees}
\author{Johannes Fischer and Pawe{\l} Gawrychowski}
\begin{document}
\maketitle

\begin{abstract}
It is widely assumed that $O(m+\lg \sigma)$ is the best one can do for finding a pattern of length $m$ in a compacted trie storing strings over an alphabet of size $\sigma$, if one insists on linear-size data structures and deterministic worst-case running times [Cole et al., ICALP'06]. In this article, we first show that a rather straightforward combination of well-known ideas yields $O(m+\lg\lg \sigma)$ deterministic worst-case searching time for static tries.

Then we move on to dynamic tries, where we achieve a worst-case bound of $O(m+\frac{\lg^{2}\lg\sigma}{\lg\lg\lg\sigma})$ per query or update, which should again be compared to the previously known $O(m+\lg\sigma)$ deterministic worst-case bounds [Cole et al., ICALP'06], and to the alphabet \emph{in}dependent $O(m+\sqrt{\lg n/\lg\lg n})$ deterministic worst-case bounds [Andersson and Thorup, SODA'01], where $n$ is the number of nodes in the trie. The basis of our update procedure is a weighted variant of exponential search trees which, while simple, might be of independent interest.

As one particular application, the above bounds (static and dynamic) apply to suffix trees. There, an update corresponds to pre- or appending a letter to the text, and an additional goal is to do the updates quicker than rematching entire suffixes. We show how to do this in $O(\lg\lg n + \frac{\lg^{2}\lg\sigma}{\lg\lg\lg\sigma})$ time, which improves the previously known $O(\lg n)$ bound [Amir et al., SPIRE'05].
\end{abstract}

\section{Introduction}
Text indexing is a fundamental problem in computer science. It requires storing a text of length $n$, composed of letters from an alphabet of size $\sigma$, such that subsequent pattern matching queries can be answered quickly. Typical such pattern matching queries are (1) \emph{existential} queries (deciding whether or not the pattern occurs in the text), (2) \emph{counting} queries (determining the number of occurrences), and (3) \emph{enumeration} queries (listing all positions where the pattern occurs). The text is either static, or can be modified by appending new letters.

Well-known static text indexes are \emph{suffix trees} \cite{mccreight76space} and \emph{suffix arrays} \cite{manber93suffix}. The former admit, in their plain form, $O(m\lg\sigma)$ existential and counting queries for a pattern of length $m$, while the latter achieve $O(m+\lg n)$ time with the help of two additional arrays of size $n$ storing the lengths of longest common prefixes needed during the binary search. With perfect hashing \cite{fredman84storing}, suffix trees achieve $O(m)$ time, but then the $O(n)$ preprocessing time is only in \emph{expectation}. Enjoying the best of both worlds, the \emph{suffix tray} of Cole et al. \cite{cole06suffix} achieves $O(m+\lg \sigma)$ searching time, with a linear space data structure that can be constructed in $O(n)$ deterministic time for a static text. In the dynamic setting, suffix trees can be updated in amortized expected constant time, where the amortization comes from the need to locate the node which should be updated, and expectation from the hashing used to store outgoing edges. If we insist on getting worst-case time bounds, a recent result of Kopelowitz~\cite{Kopelot12indexing} allows updates in $O(\lg\lg n +\lg\lg\sigma)$ worst-case (but still expected) time. The \emph{suffix trists} of Cole et al.~\cite{cole06suffix} achieve a deterministic bound 
of $O(m+\lg\sigma)$ for searching, with a linear space data structure that can be updated in $O(f(n,\sigma)+\lg\sigma)$ deterministic and worst-case time, where $f(n,\sigma)$ is the time required to locate the edge of the suffix tree which should be split, and the best bound on $f(n,\sigma)$ known so far for the general case is $O(\lg n)$~\cite{Amir05towards}. In all cases, $O(occ)$ additional time is needed for enumerating the $occ$ occurrences.

In all of the tree based structures mentioned in the previous paragraph, the crucial point is how to implement the outgoing edges of the tree such that they can be searched efficiently for a given query character. This is the general setting of \emph{trie search}, and in fact, we can and do formulate our results in terms of tries, and view suffix trees as one particular application. In this setting, it is worth mentioning the result of Andersson and Thorup \cite{andersson07dynamic}, who show how to update or search the trie in $O(m+\sqrt{\lg n/\lg\lg n})$ deterministic worst-case time, for $n$ being the number of stored strings. In this article, however, we focus on alphabet-\emph{dependent} running times, as outlined next.

\subsection{Our Result and Outline}
Our main results are summarized in the following theorems. We work in the standard word RAM model, where it is assumed that the memory consists of $\Omega(\lg n)$-bit cells that can be manipulated using standard C-like operations in $O(1)$ time.

In tries, we support stronger forms of existential queries, namely prefix queries (deciding whether or not the search pattern is a prefix of one of the stored strings), and predecessor queries (returning the largest string stored that is no less than the query pattern):

\begin{theorem}
  \label{thm:main}
  A compacted trie storing $n$ static strings over an alphabet of size $\sigma$ can be stored in $O(n)$ space (in addition to the stored strings themselves) such that subsequent prefix- or predecessor queries can be answered in $O(m+\lg\lg\sigma)$ deterministic worst-case time, for patterns of length $m$. This data structure can be constructed in deterministic $O(n)$ time, for $\sigma=n^{O(1)}$.
\end{theorem}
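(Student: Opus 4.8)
The plan is to keep the textbook compacted-trie navigation scheme and to speed up, in a static setting, the two places where one would otherwise pay $\Theta(\lg\sigma)$. Recall that a compacted trie on $n$ strings has $O(n)$ nodes, and that a prefix- or predecessor-query for a pattern $P$ of length $m$ is answered by walking from the root, matching $P$ against edge labels and, at each branching node of string-depth $d$, selecting the outgoing edge whose label begins with $P[d+1]$; the number of matched characters and the number of branching nodes visited along such a walk are both $O(m)$, so the cost is $O(m)$ plus the total child-selection time. Cole et al.\ keep selection cheap by an $O(1)$-per-node descent through ``heavy'' nodes carrying $\sigma$-sized arrays and then one $\lg\sigma$-cost binary search in the surviving ``light'' subtree; my aim is to make selection cost $O(1)$ per node and to pay $\lg\lg\sigma$ at most once.

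Concretely, I would store the compacted trie and annotate every node $v$ with (i) a deterministic static dictionary (perfect hash table) of size $O(\deg v)$ mapping each first character of an outgoing edge to the corresponding child, with $O(1)$ lookup; (ii) a deterministic static predecessor structure over the universe $[\sigma]$ on those same first characters, of size $O(\deg v)$ and with $O(\lg\lg\sigma)$ query time (a derandomized $y$-fast trie); and (iii) pointers to the smallest and largest leaf below $v$ in the global sorted order. All of this is $O(n)$ extra space. A query descends from the root, using structure (i) to pick the next child in constant time and matching edge labels character by character. If all of $P$ is consumed, it is a prefix of every stored string below the reached locus and the answer follows from (iii); if $P$ mismatches in the interior of an edge, one character comparison plus (iii) gives the answer in $O(1)$ time. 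The only other possibility is that the walk reaches a branching node $v$ of depth $d$ with $P[d+1]$ absent among $v$'s outgoing characters --- and there is exactly one such node along the walk --- whereupon a single call to $v$'s predecessor structure (ii) costs $O(\lg\lg\sigma)$ and, combined with (iii), pinpoints the predecessor (resp.\ successor). Hence the query runs in $O(m)$ plus at most one $O(\lg\lg\sigma)$ call, i.e.\ $O(m+\lg\lg\sigma)$; a pure prefix query never touches (ii) and runs in $O(m)$.

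For the $O(n)$ construction (when $\sigma=n^{O(1)}$): radix-sort the strings in $O(n)$, build the compacted trie and the leaf pointers in $O(n)$, and then build all the dictionaries and $y$-fast tries; here the hypothesis $\sigma=n^{O(1)}$ --- i.e.\ a universe of size $n^{O(1)}$ for the combined (node, character) keys --- is exactly what lets one invoke known \emph{deterministic} static-dictionary constructions that run in linear total time. I expect this derandomized construction, together with the small but fiddly case analysis that guarantees at most one predecessor call per query (interior-of-edge mismatches, $P$ being a prefix or an already-stored string, stored strings that end at internal nodes, and $P$ falling to the left of $v$'s whole subtree), to be the main obstacle; everything else is a routine combination of standard ingredients, and in particular the heavy/light ``suffix tray'' machinery can simply be bypassed once deterministic static hashing is available.
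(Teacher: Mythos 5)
The central step of your construction --- building, at every node, a deterministic static dictionary with $O(1)$ lookups so that the whole family of dictionaries is constructed in $O(n)$ total time --- is not supported by known results, and this is precisely the obstacle the paper is designed to circumvent. Deterministic construction of a linear-space, constant-lookup static dictionary on $k$ keys is not known to be possible in $O(k)$ time: the bounds the paper relies on are $O(k\lg^2\lg k)$ (Ru\v{z}i\'{c}, Prop.~\ref{thm:hashing}) or $O(k\lg k)$ (Hagerup--Miltersen--Pagh), and the hypothesis $\sigma=n^{O(1)}$ does not help here (in the paper it is only used to sort, e.g.\ to build the suffix array, in linear time). So your plan of hashing the outgoing characters at \emph{every} node costs $\Theta(n\lg^2\lg\sigma)$ construction time, not $O(n)$, and your remark that ``the heavy/light suffix-tray machinery can simply be bypassed once deterministic static hashing is available'' assumes exactly the tool that is missing. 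A similar (though fixable) issue affects your per-node ``derandomized $y$-fast trie'': naively it again needs a deterministic dictionary, and the paper has to prove a separate statement (Prop.~\ref{thm:predecessor}) showing that an $O(\lg\lg u)$-time predecessor structure \emph{can} be built deterministically in $O(k)$ time, by storing only $k/(\lg u\lg^2\lg k)$ evenly spaced keys in an $x$-fast trie and finishing each query with a local binary search.

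The paper's proof keeps your query-time analysis (walk with $O(1)$ child selection, at most one predecessor call) but pays for the expensive dictionaries only where it can afford them: nodes are split into heavy ($\geq s$ leaves below) and light with $s=\Theta(\lg^2\lg\sigma)$; only branching heavy nodes store Ru\v{z}i\'{c} dictionaries, and only for their (heavy,heavy) edges, so the total number of dictionary keys is $O(n/s)$ and the total construction time is $O((n/s)\lg^2\lg\sigma)=O(n)$; nonbranching heavy nodes keep a single pointer; the (heavy,light) edges are handled by the linear-time-constructible predecessor structure of Prop.~\ref{thm:predecessor}, paid at most once per query ($O(\lg\lg\sigma)$); and the light subtree reached this way has at most $s$ leaves, so it is finished by a binary search over its (suffix-array) interval in $O(m+\lg s)$ time. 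To repair your proposal you would need either this kind of indirection or a genuinely new deterministic linear-time static dictionary construction, which is not currently known.
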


We note that while for prefix queries the ``$+\lg\lg\sigma$''-term would in principle not be necessary, for the supported predecessor queries it certainly is. As one application, we mention suffix trees \cite{weiner73linear}:

\begin{corollary}
  \label{cor:main}
  Given a static text of length $n$ over an alphabet of size $\sigma=n^{O(1)}$, we can build in $O(n)$ time a linear-space data structure such that subsequent on-line pattern matching queries can be answered in $O(m+\lg\lg \sigma)$ deterministic time, for patterns of length $m$.
\end{corollary}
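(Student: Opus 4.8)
The plan is to instantiate Theorem~\ref{thm:main} with the suffix tree of the text. Append to $T$ a sentinel $\$$ smaller than every letter, and let $\ST$ be the suffix tree of $T\$$, that is, the compacted trie storing its $n+1$ suffixes. This trie has at most $2n+1$ nodes, and since $\sigma=n^{O(1)}$ it can be built in $O(n)$ deterministic time: radix-sort the letters in $O(n)$ time, build the suffix array and the LCP array (the longest common prefixes of lexicographically adjacent suffixes) in linear time, and read $\ST$ off these two arrays in a single left-to-right sweep. Every edge of $\ST$ is labelled by a substring of $T\$$, which we keep implicitly as a pair of text positions, so $\ST$ occupies $O(n)$ words besides the text itself; and, crucially, the total length of all edge labels --- the size parameter that governs the construction of Theorem~\ref{thm:main} --- is $O(n)$, even though the $n+1$ stored strings have combined length $\Theta(n^2)$.

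Plugging $\ST$ into Theorem~\ref{thm:main} yields, in $O(n)$ time and $O(n)$ space, a structure that answers prefix queries in $O(m+\lg\lg\sigma)$ deterministic worst-case time. A pattern $P$ of length $m$ occurs in $T$ exactly when $P$ is a prefix of some suffix of $T\$$, i.e.\ exactly when the prefix query for $P$ in $\ST$ succeeds; thus one prefix query settles the existential pattern matching problem, and, if desired, the predecessor variant additionally reports the lexicographically closest suffix. The query is on-line: the search of Theorem~\ref{thm:main} consumes the characters of $P$ strictly from left to right --- matching them one at a time against the current edge label and branching at the node below whenever an edge is exhausted --- so it reads $P$ incrementally, without knowing $m$ in advance, and it can halt as soon as all of $P$ has been matched or a character of $P$ disagrees with the trie. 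The additive $\lg\lg\sigma$ is incurred at most once, by the concluding predecessor step, and, as noted in the remark after Theorem~\ref{thm:main}, not at all for a pure prefix query.

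Counting and enumeration come essentially for free: store with each node of $\ST$ the number of leaves in its subtree and the endpoints of the corresponding interval of the suffix array, which affects neither the $O(n)$ construction time nor the $O(n)$ space. Once the prefix query has located $P$, the number of occurrences is read off that node in $O(1)$ time, and the $occ$ occurrences are listed in $O(occ)$ further time.

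I expect the only point that really needs checking --- everything else being bookkeeping --- is that the construction of Theorem~\ref{thm:main} never inspects more than $O(n)$ characters of the stored strings, so that its hypotheses are met by the implicit pointer-into-$T$ representation of the edge labels of $\ST$ rather than by an explicit list of the $n+1$ suffixes. This is exactly why we run the theorem on the compacted trie $\ST$, charging its preprocessing against the $O(n)$ total edge-label length instead of against the $\Theta(n^2)$ combined length of the suffixes.
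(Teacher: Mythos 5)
Your proposal is correct and matches the paper's (implicit) argument: the corollary is obtained exactly by building the suffix tree in $O(n)$ time via the suffix array and LCP-array (as described in Sect.~\ref{sect:preliminaries}) and applying Theorem~\ref{thm:main} to this compacted trie, with suffix array intervals handling counting/enumeration. The only nitpick is your reading of the remark after Theorem~\ref{thm:main}: the paper says the ``$+\lg\lg\sigma$'' term is in principle avoidable for prefix queries, not that the given structure avoids it (its (heavy,light)-edge lookup still costs $O(\lg\lg\sigma)$), but this does not affect the claimed bound.
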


In the dynamic setting we get the following result:

\begin{theorem}
\label{thm:main2}
  We can maintain a linear-size structure for a trie storing strings over an alphabet of size $\sigma$ under adding a new string of length $\ell$ in deterministic worst-case time $O(\ell+\frac{\lg^{2}\lg\sigma}{\lg\lg\lg\sigma})$ so that subsequent on-line prefix- or predecessor queries can be answered in $O(m+\frac{\lg^{2}\lg\sigma}{\lg\lg\lg\sigma})$ deterministic time, for patterns of length $m$.
\end{theorem}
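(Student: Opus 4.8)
The plan is to dynamize the structure behind Theorem~\ref{thm:main}, namely to maintain the compacted trie together with a \emph{heavy--light decomposition}: call a node \emph{heavy} if the subtree below it contains at least $\sigma$ of the stored strings, and \emph{light} otherwise. Then there are only $O(n/\sigma)$ heavy branching nodes, and every maximal light subtree stores fewer than $\sigma$ strings and hence has fewer than $\sigma$ nodes and edges. At each heavy branching node I would store its outgoing edges twice: in an array of size $\sigma$ indexed by the first letter of the edge label, using the standard lazy-initialization trick so that allocation, lookup and update cost $O(1)$ worst case, and in a deterministic dynamic predecessor structure over the at most $\sigma$ branching letters present; the array is used for the $O(1)$-per-node exact descent along the heavy part of the search path, and the predecessor structure is consulted only \emph{once}, at the single heavy node where the pattern falls off an edge or misses a child, which is what supplies the ``$+$'' term for predecessor queries. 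Since each array has size $\sigma$ but there are only $O(n/\sigma)$ of them, and the predecessor structures have size linear in their number of keys, the total space is $O(n)$. A query thus spends $O(m)$ in the heavy part and then either stops or enters a single maximal light subtree, whose treatment is described next.

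The core of the construction is to search and update a single maximal light subtree $L$ (with fewer than $\sigma$ leaves) in $O(m' + \tfrac{\lg^{2}\lg\sigma}{\lg\lg\lg\sigma})$ time, where $m'$ is the number of pattern characters scanned inside $L$. I would store the leaves of $L$ in a \emph{weighted exponential search tree}: an exponential search tree over the leaf-strings of $L$ in which every node branches on a single character position (so each node has at most $\sigma$ children), with items carrying weights that reflect the sizes of the corresponding subtrees of $L$, so that the tree stays balanced with respect to those weights as leaves are inserted or an internal edge is split. Because $L$ has fewer than $\sigma$ leaves the tree has depth $O(\lg\lg\sigma)$; at each of its nodes the correct child is located by a deterministic static predecessor search among at most $\sigma$ single-character keys, which I would implement and periodically rebuild so that a query costs $O(\tfrac{\lg\lg\sigma}{\lg\lg\lg\sigma})$, giving $O(\tfrac{\lg^{2}\lg\sigma}{\lg\lg\lg\sigma})$ per descent; as in a blind-trie search each pattern character is inspected $O(1)$ times during the descent and one final longest-common-prefix check against a single leaf pins down the locus (and, for a predecessor query, the predecessor leaf of $L$), contributing the additive $O(m')$. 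Adding a string of length $\ell$ is handled symmetrically: locate its locus in $O(\ell+\tfrac{\lg^{2}\lg\sigma}{\lg\lg\lg\sigma})$, split an edge if necessary, attach a new leaf storing the remaining $O(\ell)$ characters, bump the subtree-size counters along the root path — which has only $O(\ell)$ nodes because the new leaf sits at string-depth $\ell$ — and insert the new leaf into the relevant weighted exponential search tree; since counters only grow under insertions, each node turns heavy exactly once.

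The part I expect to need the most care is twofold. First is the weighted exponential search tree itself: one must choose the branching factors and the weight invariants, and design the rebalancing, so that a sequence of insertions and edge-splits keeps every node's degree and every subtree's weight within the intended ranges while charging only $O(\tfrac{\lg^{2}\lg\sigma}{\lg\lg\lg\sigma})$ amortized work per update, and then de-amortize this by performing the rebuilds of the static per-node predecessor structures incrementally in the background so that the bound becomes worst case; getting the exponential-search-tree depth, the weights, and the deterministic static per-node predecessor structures to combine to exactly $\tfrac{\lg^{2}\lg\sigma}{\lg\lg\lg\sigma}$ is the delicate accounting, and this is the ``simple but possibly independently interesting'' ingredient. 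Second is the heavy--light boundary: when an insertion turns a light node heavy, one must build a fresh size-$\sigma$ array and a predecessor structure over its children and dismantle the corresponding part of a light-subtree structure; because a node turns heavy only after $\Theta(\sigma)$ insertions into its subtree this is $O(1)$ amortized, and I would de-amortize it by spreading each such transition over the next $\Theta(\sigma)$ updates while keeping both representations live in parallel. Combining the $O(m)$ heavy-path cost, the single light-subtree cost $O(m' + \tfrac{\lg^{2}\lg\sigma}{\lg\lg\lg\sigma})$, and the de-amortized rebuilding then yields the claimed worst-case bounds for both queries and updates.
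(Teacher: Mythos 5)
There is a genuine gap, and it sits exactly where the paper's main technical work lies: the treatment of a light subtree $L$. You posit an ``exponential search tree over the leaf-strings of $L$ in which every node branches on a single character position,'' with depth $O(\lg\lg\sigma)$ and the property that each pattern character is inspected $O(1)$ times, so that one predecessor search per level gives $O(\frac{\lg^{2}\lg\sigma}{\lg\lg\lg\sigma})$ plus a final LCP check. These requirements are mutually inconsistent as stated. If every node branches on a single character position, the structure is essentially a Patricia/blind trie, and its node-depth is dictated by the branching structure of $L$, not by the number of leaves: for a caterpillar-shaped set such as $\{a^{i}b : 1\le i<\sigma\}$ the depth is $\Theta(\sigma)$, not $O(\lg\lg\sigma)$. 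Conversely, if you rebalance to depth $O(\lg\lg\sigma)$ by exponential-search-tree splitter selection, the splitters within one node need not be distinguishable by any single character position, so deciding the correct child requires genuine string comparisons (or maintained LCP information), and the claims ``at most $\sigma$ single-character keys per node'' and ``each pattern character is inspected $O(1)$ times'' no longer follow; making such a rebalanced string structure deterministic, dynamic and worst-case with exactly these bounds is not an off-the-shelf ingredient but essentially the problem to be solved. Note also that in your design the weights have no identifiable role, which signals that the key idea is missing: in the paper the weighted structure is not a single dictionary over the leaf-strings of $L$, but a predecessor structure placed at \emph{every} trie node over the first characters of its child edges, with each child weighted by its subtree size (a static $O(1)$-lookup dictionary additionally handles children of ``similar'' level). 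The point of the weights is Lemma~\ref{thm:search}: along a root-to-leaf search path the successive queries satisfy $W_{i+1}\le w_i$, so the costs $O(\lg\frac{\lg W_i}{\lg w_i}\cdot\frac{\lg\lg\sigma}{\lg\lg\lg\sigma})$ telescope to $O(\frac{\lg^{2}\lg\sigma}{\lg\lg\lg\sigma})$ in total, even though the structure may be consulted at many nodes. Without either this telescoping mechanism or a correct substitute for it, your per-light-subtree cost claim is unsupported.

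A secondary but real omission is the maintenance of the weights and levels as subtrees grow. The paper must handle nodes changing level (``promoting''), which involves rebuilding static dictionaries, re-partitioning fragments, and inserting a batch of elements into a wexponential search tree, all deamortized by background processes paced against future insertions; your proposal acknowledges deamortization only generically (spreading the light-to-heavy transition over $\Theta(\sigma)$ updates), which is fine for the outer heavy/light boundary but does not address the inner rebalancing of the weighted structure, where the paper's accounting ($f(\ell)=\lfloor 2^{(3/2)^{\ell}}\rfloor$, credits $\max(0,W-f(\ell+1))$, splitting spread over $\Theta(f(\ell))$ updates) is precisely what makes the worst-case bound go through. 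The outer decomposition you describe (arrays plus one dynamic predecessor structure at heavy nodes, a single light subtree entered per query) does match the paper's reduction step, but the heart of the proof is missing.
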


This result can be formulated in terms of suffix trees, too:

\begin{corollary}
  \label{cor:main2}
  We can maintain a linear-size structure for a text over an alphabet of size $\sigma=n^{O(1)}$ under prepending a new letter in deterministic worst-case time $O(f(n,\sigma)+\frac{\lg^{2}\lg\sigma}{\lg\lg\lg\sigma})$ so that on-line pattern matching concerning the current text can be answered in $O(m+\frac{\lg^{2}\lg\sigma}{\lg\lg\lg\sigma})$ deterministic worst-case time, for patterns of length $m$. $f(n,\sigma)$ is the cost of updating the suffix tree for a text $n$ over an alphabet of size $\sigma$ after prepending a letter.
\end{corollary}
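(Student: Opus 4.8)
The plan is to instantiate the dynamic trie of Theorem~\ref{thm:main2} on the suffix tree (the compacted trie of all suffixes) of the current text, and to observe that prepending a letter triggers only a constant amount of structural work once the insertion point is known. Append a sentinel \$ to the text once and for all, so that every suffix ends in its own leaf; prepending a letter $c$ to the text $T$ then leaves all suffixes of $T$ unchanged and introduces the single new suffix $cT\$$. Hence maintaining the suffix tree under prepending reduces to inserting one new leaf into the compacted trie, and, if its branching point falls in the interior of an existing edge, splitting that edge by creating one new internal node of degree two.

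First I would locate this branching point. This is exactly the task that defines $f(n,\sigma)$: following Weiner's right-to-left construction~\cite{weiner73linear} and the dynamic variant of Amir et al.~\cite{Amir05towards}, the suffix-link structure yields, in $f(n,\sigma)$ time, the existing node or edge off which $cT\$$ diverges, together with the depth of divergence. One does \emph{not} re-traverse the new suffix from the root; the $\Theta(\ell)$ term that appears in the generic update bound of Theorem~\ref{thm:main2} is avoided precisely because the suffix links deliver the locus directly. Next I would carry out the structural update: creating the new leaf inserts one child pointer into the children-dictionary of the (old or newly created) parent, and splitting an edge creates one new node whose dictionary holds exactly two children. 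By Theorem~\ref{thm:main2}, each such single-child insertion into, or initialization of, a per-node dictionary costs $O(\frac{\lg^{2}\lg\sigma}{\lg\lg\lg\sigma})$, and only $O(1)$ of them are performed, so the structural update costs $O(\frac{\lg^{2}\lg\sigma}{\lg\lg\lg\sigma})$ on top of the $f(n,\sigma)$ navigation. Edge labels are kept as pairs of text positions measured from the right end of the text, so the suffixes are represented implicitly, total space stays linear, and prepending does not invalidate any stored label. The query bound is immediate: an on-line pattern match for a pattern of length $m$ is a prefix/predecessor query in the trie, answered in $O(m+\frac{\lg^{2}\lg\sigma}{\lg\lg\lg\sigma})$ time by Theorem~\ref{thm:main2}.

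The main obstacle is the interface between the two phases: the update bound of Theorem~\ref{thm:main2} is stated for adding an \emph{entire} string, so I must open up its proof and confirm that, once the branching locus is supplied from outside, the cost collapses to a constant number of single-child dictionary operations. Concretely, I need the weighted (``Wexponential'') search trees at the $O(1)$ affected nodes to absorb the insertion of one new child, together with the attendant change of a single leaf weight, within the claimed $O(\frac{\lg^{2}\lg\sigma}{\lg\lg\lg\sigma})$ bound and without a global rebuild. Isolating this local-update property from Theorem~\ref{thm:main2} and then composing it with the $f(n,\sigma)$-time suffix-link navigation yields the corollary.
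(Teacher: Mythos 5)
Your proposal matches the paper's (implicit) argument: the corollary is obtained exactly by combining the assumed suffix tree oracle, which delivers the edge to be split in $f(n,\sigma)$ time, with the dynamic trie structure behind Theorem~\ref{thm:main2}, whose update procedure is precisely the local operation of splitting one edge and attaching one new leaf at a known locus in $O(\frac{\lg^{2}\lg\sigma}{\lg\lg\lg\sigma})$ worst-case time (the $\ell$ term in Theorem~\ref{thm:main2} only pays for matching the new string against the trie, which the oracle renders unnecessary). The local-update property you single out as the remaining obstacle --- that inserting one child and propagating the weight increase along the ancestor path stays within $O(\frac{\lg^{2}\lg\sigma}{\lg\lg\lg\sigma})$ without any global rebuild --- is exactly what the paper's proof of Theorem~\ref{thm:main2} establishes via fragments and the telescoping wexponential weight updates, so your argument is complete once that proof is cited.
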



It was already noted that the currently best bound \cite{Amir05towards} for deterministic worst-case suffix tree oracles is $f(n,\sigma)=\lg n$. Though not being the main goal of this article, in the appendix we show a better suffix tree oracle, giving us truly better total running times:

\begin{theorem}
  \label{thm:main3}
There is a deterministic worst-case suffix tree oracle with $f(n,\sigma)=O(\lg\lg n + \frac{\lg^2\lg\sigma}{\lg\lg\lg\sigma})$.
\end{theorem}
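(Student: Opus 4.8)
The plan is to construct a suffix tree oracle---a data structure that, upon prepending a letter to the current text, identifies (a pointer to) the edge of the suffix tree that must be subdivided to accommodate the new leaf---running in $O(\lg\lg n + \frac{\lg^2\lg\sigma}{\lg\lg\lg\sigma})$ deterministic worst-case time. The starting point is the classical reverse-prefix / Weiner-style view: prepending a letter $c$ to the text $T$ amounts to inserting a new leaf for $cT$ into the suffix tree, and the locus where this leaf attaches is governed by the Weiner link structure, i.e.\ for the node $v$ that is the lowest ancestor of the leaf for $T$ admitting a Weiner link labelled $c$, we follow that link and the new leaf hangs off the resulting node (splitting an edge if the landing point is implicit). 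Equivalently, one maintains the suffix tree of the \emph{reversed} text and performs an Ukkonen-style extension. The task of the oracle is precisely to locate, in worst-case deterministic time, the appropriate ancestor and the offset along the outgoing edge.

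First I would recall Amir et al.'s $O(\lg n)$ oracle and observe that its bottleneck is a predecessor/ancestor search along a root-to-leaf path performed with a balanced-BST-like structure; the idea is to replace that with a weighted (Wexponential) search tree as developed earlier in the paper, exploiting the fact that the relevant ancestors can be weighted by subtree sizes so that the search along the path to the previously inserted leaf costs $O(\lg\lg n)$ rather than $O(\lg n)$. Concretely, I would (i) maintain the suffix tree under the dynamic-trie machinery of Theorem~\ref{thm:main2}, which already gives the $\frac{\lg^2\lg\sigma}{\lg\lg\lg\sigma}$ term for navigating and updating the outgoing-edge dictionaries; (ii) augment each node with a Weiner-link dictionary (over the alphabet, handled by the same dynamic-predecessor structure, hence the alphabet-dependent term again); and (iii) use a Wexponential search tree keyed on string depth, or on the Euler-tour / heavy-path decomposition, to jump from the last-inserted leaf to the sought Weiner-link ancestor in $O(\lg\lg n)$ time, since the number of distinct ``break points'' encountered over the whole construction telescopes. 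The offset within the target edge is then recovered in $O(1)$ via longest-common-extension information maintained incrementally (or in $O(\lg\lg n)$ with a weighted search over the edge label, which is dominated anyway).

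The key steps, in order, are: (1) set up the reversed-text / Weiner-link formulation and state precisely what the oracle must output; (2) show how the Wexponential search tree over the path to the previous leaf supports the ancestor query with the right weights, using an amortization-free potential argument so the bound is worst-case---here the weights must be chosen so that a single query is $O(\lg\lg n)$ even though subtree sizes change, which is the delicate point; (3) handle the within-edge offset via incremental LCE; (4) account for the alphabet-dependent Weiner-link dictionary updates, which cost $O(\frac{\lg^2\lg\sigma}{\lg\lg\lg\sigma})$ each by Theorem~\ref{thm:main2}; and (5) combine the two additive terms. I expect the main obstacle to be step (2): making the path-ancestor search genuinely worst-case $O(\lg\lg n)$ under a dynamically changing tree. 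A Wexponential search tree naturally gives a bound in terms of the current weight of the queried element, and one must argue that the node sought by the Weiner-link rule always has weight at least (current text length)$^{\Omega(1)}$, or else restructure the search so the cost is charged against the newly created node (whose creation pays for it), so that even an unfavorable single query stays within budget; I would resolve this by a weighting tied to the number of leaves in the \emph{reversed} suffix tree below the node, which only grows, combined with periodic rebuilding of the Wexponential tree in the background to keep the weights calibrated, a standard technique that preserves worst-case bounds by spreading the rebuild over $\Theta(n)$ updates.
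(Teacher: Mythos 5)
Your setup (Weiner-link formulation, reduce the oracle to ``find the lowest ancestor with an $a$-link and follow it'') matches the paper's starting point, but the two ideas that actually carry the paper's proof are missing, and the machinery you propose in their place does not obviously work. First, the crux of making a Weiner-style oracle efficient is operation (4) of Breslauer--Italiano: when an edge $(u,v)$ is split, the new middle node must in principle inherit \emph{all} soft $a$-links of $v$, of which there can be $\Theta(\sigma)$. Your plan of ``augment each node with a Weiner-link dictionary'' runs straight into this: a single split would force up to $\sigma$ dictionary updates, which is far over budget. The paper's entire design exists to dodge this: soft links are stored only at heavy nodes (weight $\geq \Theta(\sigma)$, where weight counts leaves plus hard links), light nodes keep only hard links, and missing soft links are reconstructed on demand via the observation that $W_a(u)$ equals the hard $a$-link of the highest descendant possessing one. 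You never address the split-inheritance problem at all.

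Second, your mechanism for the ancestor query is a Wexponential search tree keyed on string depth along the path to the previous leaf, with weights tied to subtree sizes. This is a marked-ancestor problem in disguise, and the general version has an $\Omega(\lg n/\lg\lg n)$ lower bound; what saves the paper is that Weiner links are monotone along root-to-leaf paths (if $W_a(u)$ exists then so does $W_a(v)$ for every ancestor $v$), so the marked nodes form a connected subtree containing the root and the \emph{fringe} marked ancestor structure of Breslauer--Italiano (Lemma~\ref{lem:fringe}) applies, giving $O(\lg\lg n)$ worst case. The paper uses one global FMA structure to find the lowest heavy ancestor, then per-letter FMA structures on the induced heavy subtree together with per-edge arrays $\mathrm{lowest}[a]$, and for light subtrees a deterministic variant of Kopelowitz's ordered-list/predecessor structure (built on Prop.~\ref{thm:predecessor3}), which is where the $O(\frac{\lg^{2}\lg\sigma}{\lg\lg\lg\sigma})$ term comes from --- not from the trie-search structure of Theorem~\ref{thm:main2}. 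Your own text flags the weak point correctly (``the node sought need not have large weight''), but the proposed fixes --- weights that ``only grow'', charging to the new node, periodic background rebuilding --- do not yield a worst-case $O(\lg\lg n)$ ancestor query: the sought ancestor can be a node of constant weight arbitrarily deep in the tree, and no telescoping or rebuilding argument rescues a single bad query. Without the monotonicity/FMA idea and without a plan for the soft-link inheritance at splits (including the incremental, deamortized rebuilding of the heavy/light decomposition that the paper spells out), the proposal does not establish the theorem.
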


\subsection{Technical Contributions}
Our main technical novelty is a weighted variant of exponential search trees \cite{andersson07dynamic}, which we term \emph{wexponential search trees}. The original exponential search tree achieves $O(\lg\lg n \cdot \frac{\lg\lg u}{\lg\lg\lg u})$ search and update times for $n$ elements over a universe of size $u$. Our weighted variant generalizes this to $O(\lg\frac{\lg W}{\lg w} \cdot \frac{\lg\lg u}{\lg\lg\lg u})$, where $w$ is the weight of the searched element, and $W$ is the sum of all weights. The advantage of this is that in a sequence of $t$ hierachical accesses to such data structures, where the old ``$w$'' is always the new ``$W$'', the sum telescopes to $O(\lg\lg n \cdot \frac{\lg\lg u}{\lg\lg\lg u})$ instead of $O(t\cdot\lg\lg n \cdot \frac{\lg\lg u}{\lg\lg\lg u})$. While this general idea is pervasive in data structure, we are not aware of any previous application in the doubly-logarithmic setting.

\section{Preliminaries}
\label{sect:preliminaries}
This section introduces definitions and known results that form the base of new data structure.

\subsection{Suffix Arrays}
Let $T=t_1\dots t_n$ be a text consisting of $n$ characters drawn from an ordered alphabet $\Sigma$ of size $\sigma = |\Sigma|$. The substring of $T$ ranging from $i$ to $j$ is denoted by $T_{i..j}$, for $1\le i \le j \le n$. The substring $T_{i..n}$ is called the $i$'th \emph{suffix} of $T$ and is denoted by $\SF{i}$.

The \emph{suffix array} $\SA[1,n]$ of $T$ is a permutation of the integers in $[1,n]$ such that $\SF{\SA[i-1]} <_\mathrm{lex} \SF{\SA[i]}$ for all $1 < i \le n$. In other words, $\SA$ describes the lexicographic order of the suffixes. The suffix array can be built in linear time for $\sigma=n^{O(1)}$~\cite{kaerkkaeinen06linear}.

\subsection{Suffix Trees}
The suffix tree $\ST$ is a compacted trie over all the suffixes of $T$. By concatenating all edge labels on a root-to-node path, its nodes spell out substrings of $T$. If we append a unique character $\$\not\in\Sigma$ to $T$, then every leaf spells out exactly one suffix of $T\$$. Indeed, if the leaves are visited in a lexicographically driven depth first traversal of $\ST$, then the suffixes are visited in the order of the suffix \emph{array} $\SA$. This implies that every node $v$ of $\ST$ represents an \emph{interval} $[\ell_v,r_v]$ in $\SA$, whose endpoints $\ell_v$ and $r_v$ can be stored with $v$: if the root-to-$v$ path spells out $\alpha\in\Sigma^\star$, then $v$ stores two integers $\ell_v$ and $r_v$, such that $\SF{\SA[\ell_v]}$ is the lexicographically first suffix starting with $\alpha$, and $\SF{\SA[r_v]}$ is lexicographically last. We call these the \emph{suffix array intervals}.

Given the suffix array, the suffix tree can be constructed in $O(n)$ time by repeatedly inserting the suffixes in lexicographic order, with the help of the so-called \emph{LCP-array} \cite{gusfield97algorithms}.

Given the suffix tree for a text $T$, constructing the suffix tree for a text $aT$, for any letter $a$, requires adding just one additional suffix, which in a compacted trie means splitting at most one edge into two parts and adding a single edge outgoing from the middle node (or one of the already existing nodes, if no new middle node was created). We call this \emph{splitting} an edge. In the dynamic setting we will assume the existence of a \emph{suffix tree oracle} telling us which edge should be split after prepending $a$, and denote the time taken by a single such call by $f(n,\sigma)$. The oracle is assumed to be deterministic, but the bound on $f(n,\sigma)$ can be amortized if we aim at getting amortized time bounds.


\subsection{Suffix Trays}
\label{sect:stray}
The \emph{suffix tray} \cite{cole06suffix} is a blend of the suffix tree and the suffix array. The idea is to discard small subtrees of the suffix tree and only keep the upper part where nodes have sufficiently many leaves below them. Then this upper part can be augmented with ``expensive'' information to enable $O(m)$ pattern searches if the pattern ends in the upper part. For patterns not ending in the upper part, one switches to binary suffix array search as outlined in the introduction. If the binary searched intervals are of size $O(\sigma^{O(1)})$, this latter step costs only $O(m+\lg \sigma)$ time.

\begin{figure}
\centering
\includegraphics[scale=1]{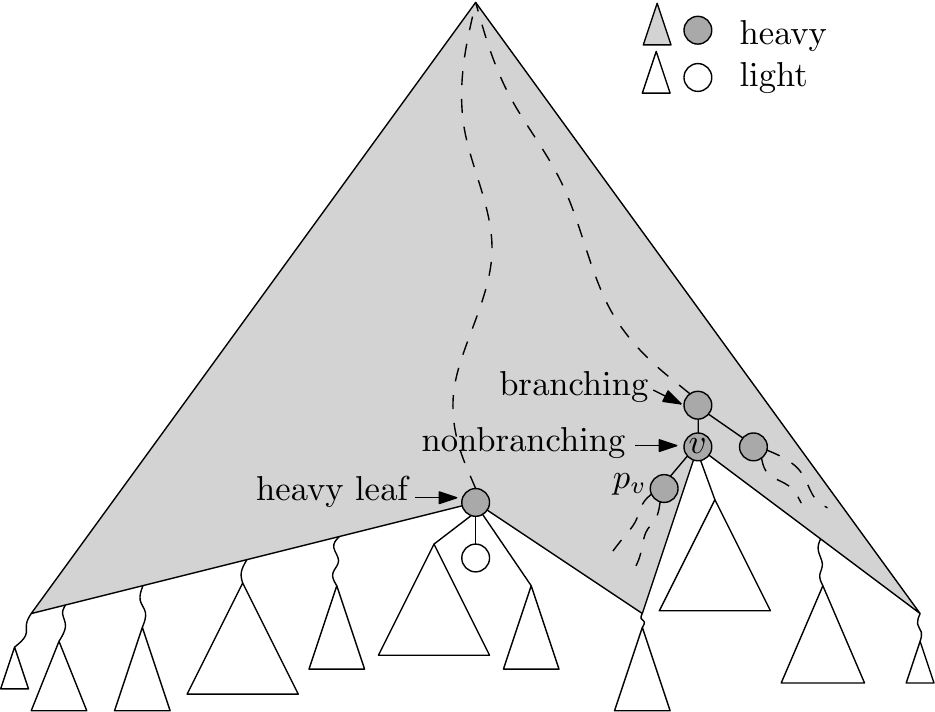}
\caption{Schematic view of a suffix tray.}
\label{fig:stray}
\end{figure}

More precisely, we classify the suffix tree nodes into heavy and light as follows: a node is called \emph{heavy} if it has at least $\sigma$ leaves in its subtree; otherwise it is called \emph{light}. The heavy nodes form a connected subtree (this is the \emph{upper part} aluded to above). Heavy nodes are further subdivided into \emph{branching} and \emph{nonbranching}, depending on whether they have at least two heavy children or just one. See also Fig.~\ref{fig:stray}.

Light children of heavy nodes store their corresponding suffix array interval. The heavy nodes store additional information as described next. Nonbranching heavy nodes $v$ store a special pointer $p_v$ to their only heavy child. Branching heavy nodes $v$ store an array $A_v$ of size $\sigma$, where $A_v[a]$ stores a pointer to the node $w$ such that the first character on the edge $(v,w)$ is $a\in\Sigma$. Since there are only $O(\frac{n}{\sigma})$ branching heavy nodes, these arrays take only $O(n)$ space in total.

Now a search for a pattern $P[1,m]$ proceeds as follows. Assume inductively that we have already matched $P[1,i]$ and are currently at node $v$ (we start at the root with $i=0$). At branching heavy nodes $v$, we can move in $O(1)$ time to the correct child using the array $A_v$. If the current node $v$ is nonbranching internal, we first check if the first character on the edge $(v,p_v)$ equals $P[i+1]$. If so, we proceed with $p_v$. Otherwise, we binary search among the children of $v$ for the character $P[i+1]$. This will bring us to a light child $w$ of $v$, where we binary search $\SA$ within the suffix array interval $[\ell_w, r_w]$. A similar binary search is performed if the array $A_v$ directs us to a light node. Since there are at most two binary searches on arrays of size $O(\sigma)$, the claimed running time of $O(m+\lg\sigma)$ follows.

\subsection{Suffix Trists}

The \emph{suffix trist} \cite{cole06suffix} is a dynamic version of the suffix tray. The idea is, as in the suffix tray, to maintain the suffix tree, and store the outgoing edges in different ways depending on the size of the subtree. We assume that a suffix tree oracle is available, and hence on updating the sets on those edges. For nodes with a large number of leaves in their subtrees, we can afford to store the outgoing edges in a large but quick to both access and update data structure. For nodes with just a few leaves, we switch to a different representation, called Balanced-Indexing-Structure. The main difficulty is that the size of a subtree increases in time, and hence one needs to cleverly move from one representation to another. Even more challenging part of the problem is that we want the bounds to be worst-case, hence the transition between different representations must be done in an incremental fashion.

We do not describe the details of the original solution, as our idea is slightly different. We just state that its time bounds are $O(f(n,\sigma)+\lg\sigma)$ for updates and $O(m+\lg\sigma)$ for queries, both deterministic worst-case.

\subsection{Deterministic Hashing and Predecessor Data Structures}
\label{sect:hashing}
We make use of the following result by Ru\v{z}i\'{c} \cite[Theorem~3]{ruzic08constructing}. Note that even the $O(k\lg k)$ construction time of Hagerup, Miltersen, and Pagh~\cite{hagerup01dictionaries} would be enough for our purposes, but their implementation requires some compile-time constants not known to be efficiently computable, and does not seem to be significantly simpler.

\begin{prop}
  \label{thm:hashing}
  In the word RAM model, a static linear-size dictionary on a set of $k$ keys can be deterministically constructed in time $O(k\lg^2\lg k)$, so that lookups to the dictionary take time $O(1)$.
\end{prop}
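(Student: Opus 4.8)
I would prove this by derandomizing the classical FKS (Fredman--Koml\'os--Szemer\'edi) two-level perfect-hashing dictionary. Recall that for a set $S$ with $|S|=k$ drawn from $[2^w]$, FKS achieves $O(k)$ words and $O(1)$ lookups as soon as one has an \emph{outer} hash $h_1$ partitioning $S$ into $k$ buckets $B_i$ with $\sum_i|B_i|^2=O(k)$, and for each bucket an \emph{inner} hash $g_i$ that is injective on $B_i$ and ranges over $O(|B_i|^2)$ slots; standard counting shows such $h_1,g_i$ exist inside the multiply--shift family $x\mapsto\lfloor(ax\bmod 2^w)/2^{w-s}\rfloor$, which is almost universal and evaluable in $O(1)$ time. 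As $\sum_i|B_i|^2=O(k)$ forces $\sum_i|B_i|=O(k)$, and the per-bucket search for $g_i$ costs $\mathrm{poly}(|B_i|)$, the entire problem reduces to \emph{finding suitable multipliers deterministically and quickly}.

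The difficulty is that the multiply--shift family has $2^{\Omega(w)}$ members while a naive verification of one candidate inspects all $\Theta(k^2)$ pairs. I would first remove the dependence on $w$ by deterministically perfect-hashing $S$ into a polynomial universe $[k^{O(1)}]$, via iterated ``universe collapsing'': given keys in $[2^w]$ with $w\gg\lg^2 k$, split a key into $\Theta(\sqrt w)$ fields and combine them by a linear form over a suitable ring, picking the coefficients one at a time by the method of conditional expectations so that no two keys are mapped together --- each coefficient is scored in $O(k)$ time using a word-packed encoding of the pairs that are still colliding, so one collapsing round costs $O(k\lg\lg k)$ and brings the bit length from $w$ down to $O(\sqrt w+\lg k)$. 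After $O(\lg\lg k)$ rounds the universe is $k^{O(1)}$, for a total of $O(k(\lg\lg k)^2)$; the rounds are organised so that their composition is realised by a single multiply-and-extract-bits operation (a running multiplier refined across the rounds rather than a pipeline of black boxes), hence the resulting perfect hash is evaluated in $O(1)$ time.

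With $S$ now sitting in a polynomial universe --- and sorted, using deterministic integer sorting in $O(k\lg\lg k)$ time --- I would build the FKS tables themselves. Because the reduced keys are $O(\lg k)$ bits long, conditional-expectation derandomisation of the outer multiplier $a$ benefits from heavy word packing, so a good $a$ (at most half the candidates are bad) is found within the $O(k(\lg\lg k)^2)$ budget; the inner hashes $g_i$ are found the same way on each bucket, contributing a suitably bounded $\sum_i\mathrm{poly}(|B_i|)$. Composing everything gives construction time $O(k(\lg\lg k)^2)$, space $O(k)$, and a lookup consisting of one evaluation of the universe-reduction hash, one of $h_1$, a bucket-metadata read, one of $g_i$, and one array probe --- $O(1)$ in total.

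The crux, I expect, is the fast deterministic search: getting a single conditional-expectation refinement down to $O(k)$ (rather than $\Theta(k^2)$) demands a carefully word-packed representation of the residual collision structure --- grouping keys by their current image and recording, in $O(1)$ amortised words per coinciding pair, which future coefficient choices would keep them coinciding --- and one must arrange the collapsing rounds so that the final map is $O(1)$-time rather than $\Theta(\lg\lg k)$-depth, and check that every sub-budget (including the large-$k$ regime) adds up to $O(k(\lg\lg k)^2)$. The FKS skeleton and the accounting around it are routine by comparison.
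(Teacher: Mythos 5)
The paper does not prove this proposition at all: it is imported verbatim as Theorem~3 of Ru\v{z}i\'{c} \cite{ruzic08constructing}, with the remark that even the weaker $O(k\lg k)$ construction of Hagerup, Miltersen, and Pagh would suffice for the paper's purposes. So the relevant question is whether your sketch actually constitutes a proof of the cited result, and it does not. You correctly identify the standard architecture (deterministic universe reduction to $k^{O(1)}$, then a derandomized FKS two-level scheme), but every step that makes the theorem nontrivial is asserted rather than established: that a single conditional-expectations refinement can be scored in $O(k)$ time via word packing; that a good outer multiplier and all inner functions can be found within the $O(k\lg^2\lg k)$ budget; and, crucially, that the $O(\lg\lg k)$ collapsing rounds can be fused into a map evaluable in $O(1)$ time rather than $\Theta(\lg\lg k)$ time. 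You flag these yourself as ``the crux,'' which is accurate --- they are essentially the entire content of Ru\v{z}i\'{c}'s paper (and of the Hagerup--Miltersen--Pagh line of work), so deferring them leaves the proposition unproved.

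There is also a concrete accounting issue in the universe-reduction phase. If one round shrinks the key length from $w$ to $O(\sqrt{w}+\lg k)$, then reaching a universe of size $k^{O(1)}$ takes roughly $\lg\lg w - \lg\lg\lg k$ rounds, which is $O(\lg\lg k)$ only when $w$ is at most quasi-polynomial in $k$; for word length $w$ much larger than $\mathrm{poly}(\lg k)$ relative to $k$ your round count and hence your time bound $O(k\lg^2\lg k)$ do not follow from what you wrote, and handling the large-$w$ regime needs a separate argument (this is one of the places where the published constructions do real work). Additionally, applying the method of conditional expectations ``coefficient by coefficient'' to a single multiplier of a multiply--shift function requires exhibiting a pessimistic estimator for the collision count that is computable per bit of the multiplier within your word-packed budget; this is not automatic from almost-universality and needs to be spelled out. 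In short: the outline is pointed in a reasonable direction, but as a proof it has genuine gaps exactly where the cited theorem is hard; for the purposes of this paper the correct move is simply to invoke Ru\v{z}i\'{c}'s result, as the authors do.
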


Combining Prop.~\ref{thm:hashing} with the classic $x$-fast tries by Willard \cite{willard83loglogarithmic}, we get the following:

\begin{prop}
  \label{thm:predecessor}
In the word RAM model, a static linear-size data structure on a set $S$ of $k$ sorted keys from a universe of size $u$ can be deterministically constructed in time $O(k)$, so that subsequent \emph{predecessor queries} $\pred(x) := \max\{y\le x\mid y\in S\}$ can be answered in $O(\lg\lg u)$ time in the worst-case.
\end{prop}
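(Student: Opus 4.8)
The plan is to realize $S$ as a \emph{$y$-fast trie}: first bucket $S$ to reduce the problem to one on only $O(k/b)$ keys, then solve that reduced instance with Willard's $x$-fast trie~\cite{willard83loglogarithmic} whose per-level dictionaries are built by the deterministic hashing of Prop.~\ref{thm:hashing}, picking the bucket size $b=\Theta(\lg u\cdot\lg^{2}\lg u)$ so that both the space and the construction time come out linear.

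Concretely, I would first split the sorted sequence $S$ into $\lceil k/b\rceil$ consecutive buckets $S_{1}<S_{2}<\cdots$, each of size at most $b$, and keep every bucket verbatim as a sorted array; let $r_{i}=\max S_{i}$, so the representative set $R=\{r_{1},r_{2},\dots\}$ has $k'=O(k/b)$ elements. On $R$ I would build a textbook $x$-fast trie: for each level $\ell\in\{0,\dots,\lceil\lg u\rceil\}$ enumerate the set $P_{\ell}$ of length-$\ell$ prefixes of the elements of $R$ (one scan of the sorted $R$ per level, duplicates dropped), store $\bigcup_{\ell}P_{\ell}$, each key tagged with its level, in a single static dictionary obtained from Prop.~\ref{thm:hashing}, and augment every node that occurs as such a prefix with (i) pointers to the smallest and the largest $R$-leaf in its subtree and (ii) a doubly linked list threading all $R$-leaves in sorted order; the augmenting data is filled in by one bottom-up and one left-to-right pass over the $O(k'\lg u)$ nodes.

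A query $\pred(x)$ then runs in two stages. Stage one locates the relevant bucket: since the set of length-$\ell$ prefixes of $x$ present in the dictionary is downward closed in $\ell$, I would binary search over $\ell$ for the deepest present prefix of $x$, using $O(\lg\lg u)$ dictionary lookups of $O(1)$ time each; at the node $v$ so found, the child of $v$ in the direction of the next bit of $x$ is absent, so all $R$-leaves below $v$ lie on one side of $x$, and $v$'s min/max-leaf pointers together with the leaf list give the predecessor and the successor of $x$ in $R$, hence the unique bucket $S_{j}$ that can contain $\pred(x)$ and its left neighbour $r_{j-1}$, in $O(1)$ time. Stage two binary searches $x$ in the sorted array $S_{j}$ (taking $r_{j-1}$ as the fallback answer), in $O(\lg b)=O(\lg\lg u+\lg\lg\lg u)=O(\lg\lg u)$ time; boundary cases are handled with the usual $\pm\infty$ sentinels. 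Summing the two stages yields $O(\lg\lg u)$ per query.

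The only place the choice of $b$ enters is the resource analysis, and I expect this balancing to be the main (modest) obstacle. The $x$-fast trie on $R$ has $\sum_{\ell}|P_{\ell}|=O(k'\lg u)=O(k/\lg^{2}\lg u)=O(k)$ nodes, so the trie, its augmenting pointers, and the bucket arrays together occupy $O(k)$ words. For construction, enumerating the $P_{\ell}$ and the two augmenting passes cost $O(k'\lg u)=O(k)$ time, while Prop.~\ref{thm:hashing} builds the dictionary on these $O(k'\lg u)$ keys in $O(k'\lg u\cdot\lg^{2}\lg(k'\lg u))$ time; because $k'\lg u\le k\le u$ we have $\lg^{2}\lg(k'\lg u)=O(\lg^{2}\lg u)$, so this is $O(\tfrac{k}{\lg^{2}\lg u}\cdot\lg^{2}\lg u)=O(k)$. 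Thus $b$ must be $\Omega(\lg u\cdot\lg^{2}\lg u)$ to absorb the $\lg^{2}\lg$ hashing overhead into $O(k)$, yet small enough that $\lg b=O(\lg\lg u)$ keeps stage two within budget, and $b=\Theta(\lg u\cdot\lg^{2}\lg u)$ satisfies both; partitioning the already-sorted $S$ and forming the $r_{i}$ is one extra linear scan. Hence the structure is built in $O(k)$ deterministic time, uses $O(k)$ space, and answers $\pred$ in $O(\lg\lg u)$ worst-case time.
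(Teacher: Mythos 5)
Your proposal is correct and follows essentially the same route as the paper: bucket the sorted keys with bucket size roughly $\lg u\cdot\lg^{2}\lg u$, build an $x$-fast trie on the bucket representatives using the deterministic dictionary of Prop.~\ref{thm:hashing} (so that the $O(k/\lg^{2}\lg u)\cdot\lg u$ trie nodes are hashed in $O(k)$ total time), and finish each query with a binary search inside one bucket, which costs $O(\lg\lg u)$. The extra detail you supply (level binary search, min/max leaf pointers, leaf list) is just the standard $x$-fast trie machinery that the paper leaves implicit.
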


\begin{proof}
Achieving $O(k\lg u\lg^{2}\lg (k\lg u))$ construction time and $O(k\lg u)$ space is trivial, as constructing the $x$-fast tree reduces to storing just $k\lg u$ elements in a static dictionary. To speed up the precomputation and reduce the space to linear, we choose $\frac{k}{\lg u\lg^{2}\lg k}$ evenly spaced keys, store them in an $x$-fast tree constructed in time $O(\frac{k}{\lg u\lg^{2}\lg k}\lg u\lg^{2}\lg k)=O(k)$, and use binary search to locate the predecessor between two adjacent chosen keys to answer each query.
\end{proof}

We also make use of the result of Beame and Fich \cite[Theorem~3]{beame02predecessor}, who achieved optimal query time possible in the static setting. For the sake of concreteness, we will assume $\epsilon=1$.

\begin{prop}
\label{thm:predecessor2}
In the word RAM model, a static data structure on a set $S$ of $k$ keys from a universe of size $u$ can be deterministically constructed in $O(k^{1+\epsilon})$ time and space, so that subsequent \emph{predecessor queries} $\pred(x) := \max\{y\le x\mid y\in S\}$ can be answered in $O(\frac{\lg\lg u}{\lg\lg\lg u})$ time in the worst-case.
\end{prop}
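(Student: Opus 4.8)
I would reconstruct the Beame--Fich static predecessor structure along the following lines. Write $w=\lceil\lg u\rceil$ and $k=|S|$, and begin by building the compacted binary trie $\mathcal T$ of the $w$-bit encodings of the keys of $S$; it has $O(k)$ nodes. A predecessor query for $x$ reduces to finding the \emph{exit node} of $x$, i.e.\ the deepest node $v$ of $\mathcal T$ whose root-to-$v$ label is a prefix of (the encoding of) $x$, together with the bit position at which the path of $\mathcal T$ is left. From this information the predecessor of $x$ is one of two trie leaves, so I would precompute with each node the lexicographically largest leaf in its subtree and the lexicographically largest leaf lying entirely to the left of its subtree; then, once the exit node is known, the answer is produced in $O(1)$ time. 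Hence it suffices to locate the largest depth $d\le w$ at which $x[1..d]$ still spells a path of $\mathcal T$.

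A classical $x$-fast trie finds this $d$ by binary search on $[1,w]$, each step being an $O(1)$ hashed membership test on a prefix of $x$, for $O(\lg w)=O(\lg\lg u)$ steps. To reach the claimed bound I would instead run a $\Theta(\lg w)$-way search on $d$: cut $[1,w]$ into $\Theta(\lg w)$ equally spaced \emph{levels}, determine in $O(1)$ which slab between two consecutive levels contains $d$, and recurse inside that slab; after $O(\lg_{\lg w} w)=O\bigl(\tfrac{\lg\lg u}{\lg\lg\lg u}\bigr)$ rounds the slab shrinks to a single level and the search terminates. Making one round cost $O(1)$ is the crux. The plan is to store, with every node of this search hierarchy, a constant number of machine words packing, for each of the $O(\lg w)$ level boundaries below it, an $O(1)$-bit flag together with a short fingerprint of the corresponding key prefix; a query round then extracts, with one multiplication, one mask and one most-significant-bit operation on these words, the first level boundary at which $x$ departs from $\mathcal T$, and descends into the single slab sandwiched between that boundary and the previous one. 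The fingerprints are disambiguated by a static deterministic dictionary built with Proposition~\ref{thm:hashing}; slabs that shrink down to $O(\lg\lg u)$ surviving trie nodes, and the boundary regime where $w$ is only mildly larger than $\lg k$ so that $\Theta(\lg w)$ fingerprints no longer pack into a word, are disposed of directly, by an explicit table or by the $O(\lg\lg u)$-time structure of Proposition~\ref{thm:predecessor}, whose cost is dominated by the global budget.

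For the resources, the search hierarchy sits on top of $\mathcal T$'s $O(k)$ nodes and has recursion depth $O\bigl(\tfrac{\lg\lg u}{\lg\lg\lg u}\bigr)$, each node contributing $O(1)$ packed words and $O(\lg w)$ dictionary entries, so the space and (using the $O(N\lg^{2}\lg N)$ construction of Proposition~\ref{thm:hashing} on $N=O(k\lg w)$ total entries) the construction time stay within $O(k^{1+\epsilon})$; the slack in $\epsilon$, taken to be $1$, absorbs the polylogarithmic-in-$\lg u$ bookkeeping and a mild universe-reduction step. The main obstacle is precisely this $O(1)$-per-round navigation: one must design the level cuts and the fingerprinting so that the ``first boundary at which $x$ leaves the trie'' is recoverable from a \emph{single} arithmetic operation on precomputed words, and one must handle the boundary regimes of the universe size where the packing degenerates --- these are exactly the delicate points in the Beame--Fich construction.
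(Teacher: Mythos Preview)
Your approach is genuinely different from the paper's, and considerably harder. The paper does \emph{not} reconstruct the Beame--Fich structure at all: it simply cites their Theorem~3 as a black box with $O(k^{4})$ construction time and space and the desired $O(\tfrac{\lg\lg u}{\lg\lg\lg u})$ query time, and then applies a one-paragraph bucketing trick to drive the exponent down. Concretely, given a structure with $O(k^{1+\epsilon})$ build cost, choose $\sqrt{k}$ evenly spaced keys, build the expensive structure on those, and build a separate expensive structure on each group of $\sqrt{k}$ keys between consecutive chosen elements; the total cost becomes $O((\sqrt{k})^{1+\epsilon}\cdot\sqrt{k})=O(k^{1+\epsilon/2})$ and the query time increases by an additive $O(\tfrac{\lg\lg u}{\lg\lg\lg u})$. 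Iterating this a constant number of times starting from $\epsilon=3$ gives any fixed $\epsilon>0$ (the paper then takes $\epsilon=1$).

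What this buys is precisely avoidance of the obstacle you yourself flag: you never have to engineer the $O(1)$-per-round packed-fingerprint navigation, nor handle the boundary regimes where the packing degenerates, nor verify that Ru\v{z}i\'{c}'s dictionary suffices to derandomise the Beame--Fich tables. Your sketch would, if completed, re-derive their result with tighter polynomial overhead directly, but as it stands the crux is deferred rather than supplied, whereas the paper's indirection argument is complete in three lines once the black box is granted.
\par
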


\begin{proof}
In the original formulation the construction time and space were $O(k^{4})$. We can improve on this with a constant number of indirection steps. More precisely, given a structure with $O(k^{1+\epsilon})$ construction time and space, we can implement a $O(k^{1+\frac{\epsilon}{2}})$ time and space structure by choosing $\sqrt{k}$ evenly spaced keys, storing them in the more expensive structure, and building a separate more expensive structure for each group of keys between two chosen elements.
\end{proof}

The above structure can be dynamized using the method of Andersson and Thorup~\cite{andersson07dynamic}.
\begin{prop}
\label{thm:predecessor3}
In the word RAM model, a dynamic linear-size data structure on a set $S$ of $k$ keys from a universe of size $u$ can be maintained, so that subsequent \emph{predecessor queries} $\pred(x) := \max\{y\le x\mid y\in S\}$ can be answered in $O(\frac{\lg^{2}\lg u}{\lg\lg\lg u})$ time, and new keys can be inserted in $O(\frac{\lg^{2}\lg u}{\lg\lg\lg u})$ time, where both bounds are deterministic  worst-case.
\end{prop}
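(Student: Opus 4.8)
The plan is to dynamize the static structure of Proposition~\ref{thm:predecessor2} by using it as the local structure inside the \emph{exponential search trees} of Andersson and Thorup~\cite{andersson07dynamic}, so the bulk of the argument is recalling their construction and checking that the parameters line up. Recall that an exponential search tree over $k$ keys is a weight-balanced multiway search tree in which a node responsible for a contiguous block of $w$ keys has $\Theta(w^{1-\delta})$ children, for a fixed constant $\delta\in(0,1)$, each child being an exponential search tree over $\Theta(w^{\delta})$ of the keys; the recursion stops at blocks of $O(1)$ keys. Because one step shrinks the block size from $w$ to $w^{\Theta(1)}$, the depth is $O(\lg\lg k)$ and, reading a root-to-leaf path bottom-up, the weights grow doubly-exponentially. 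At each internal node we store, in addition, a \emph{local} predecessor structure over the $\Theta(w^{1-\delta})$ splitter keys that separate its children.

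For that local structure I would take the Beame--Fich structure of Proposition~\ref{thm:predecessor2}: at a node with $s=\Theta(w^{1-\delta})$ splitters it can be built in $O(s^{1+\epsilon})$ time and space and answers a predecessor query in $O(\frac{\lg\lg u}{\lg\lg\lg u})$ time, the query time depending only on the universe $u$ and not on $s$. I would fix $\epsilon$ small enough that $(1-\delta)(1+\epsilon)<1$; then at a node of weight $w$ the local structure has size and construction time $O(w^{1-\gamma})$ for a constant $\gamma>0$. Summing $O(w^{1-\gamma})$ over all nodes, and using that the weights grow doubly-exponentially up each path, the total space is $O(k)$.

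A predecessor query walks the root-to-leaf path, doing one local predecessor query at each node: $O(\lg\lg k)$ nodes at $O(\frac{\lg\lg u}{\lg\lg\lg u})$ each, which is $O\!\bigl(\lg\lg k\cdot\frac{\lg\lg u}{\lg\lg\lg u}\bigr)$, and since $k\le u$ this is $O\!\bigl(\frac{\lg^{2}\lg u}{\lg\lg\lg u}\bigr)$. An insertion locates its position along the same path, inserts the key into the bottom block, and then repairs weight-balance by re-splitting nodes whose weight has grown by a constant factor and rebuilding the affected local structures, exactly as in a BB[$\alpha$]-tree. A node of weight $w$ is re-split only once per $\Omega(w)$ insertions passing through it and a re-split costs $O(w^{1-\gamma})$, so the amortized rebalancing cost is $O(w^{-\gamma})$ per node on the path, which sums (again by the doubly-exponential growth of weights) to $O(1)$ --- dominated by the search cost.

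The one genuinely delicate point, and the step I expect to be the main obstacle, is converting this amortized rebalancing cost into a worst-case one. Here I would invoke Andersson and Thorup's de-amortization: a scheduled re-split of a weight-$w$ node is performed incrementally over the next $\Theta(w)$ operations, with the previous version of the node kept available and consistent during the transition, so that each level contributes only $O(w^{-\gamma})$ extra worst-case work per operation. The only change we make to their scheme is the choice of the per-node static predecessor structure, and that structure is used purely as a black box that is periodically discarded and rebuilt from scratch on the current splitter set; hence their analysis applies unchanged and yields deterministic worst-case time $O\!\bigl(\frac{\lg^{2}\lg u}{\lg\lg\lg u}\bigr)$ for both queries and insertions in linear space, as claimed.
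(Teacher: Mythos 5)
Your proposal is correct and follows essentially the same route as the paper: the paper's entire argument for this proposition is the one-line observation that the static Beame--Fich structure of Prop.~\ref{thm:predecessor2} can be dynamized with the exponential-search-tree machinery of Andersson and Thorup, which is exactly the construction (including their worst-case de-amortization used as a black box) that you spell out. The extra detail you give on parameters, local rebuild costs, and the $k\le u$ bound is consistent with that cited machinery, so there is nothing to fix.
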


\section{New Static Data Structure}
In this section we prove Theorem~\ref{thm:main}. We store \emph{edges} of the trie as pairs of the form $(v,a)$, where $v$ is a pointer to the source of the edge, and $a\in\Sigma$ is the first character on the edge from $v$ to its corresponding child $w$. As secondary information we also attach a pointer to $w$ with the pair $(v,a)$. This way, matching a pattern $P[1,m]$ reduces to repeatedly finding correct edges: assuming inductively that we have already matched $P[1,i]$ and are currently at node $v$, we check if the edge $(v,P[i+1])$ exists, and move to that child if this is the case.

We now describe how the edges are stored. A naive storage with the data structures from Sect.~\ref{sect:hashing} would result in superlinear construction time. Hence, we need to introduce several levels of indirection. Like in the suffix tray (Sect.~\ref{sect:stray}), we divide nodes into heavy and light, but this time with parameter $s := \Theta(\lg^2\lg\sigma)$: a node with at least $s$ leaves below it is called \emph{heavy}, otherwise it is called \emph{light}. 

To continue, we classify the edges of heavy nodes into two different types: (heavy,heavy) and (heavy,light). Here, the first component of each tuple refers to the type of the parent node, and the second one to that of the child. For each branching heavy node we store the outgoing edges of type (heavy,heavy) in a dictionary using Prop.~\ref{thm:hashing}, with the key being the first character on the edge. Since there are at most $\frac{n}{s}$ heavy nodes with no heavy children, the total size of all dictionaries cannot exceed the sum of degrees in a tree on $2\frac{n}{s}$ nodes, which
is $O(\frac{n}{s})$. Furthermore,
the number of elements in each dictionary is at most $\sigma$, so constructing all those data structures takes $O(\frac{n}{s} \times \lg^2\lg\sigma) = O(n)$ time. For nonbranching heavy nodes, we only have at most one edge of type (heavy,heavy), so each such node simply stores a special pointer to the only heavy child, which enables us to decide in $O(1)$ time if we need to continue matching there.

Finally, we show how to handle the edges of type (heavy,light). For each (branching or not) heavy node we store all such outgoing edges using the data structure from Prop.~\ref{thm:predecessor}. Using the structure, we can locate the light child we should descend to in $O(\lg\lg\sigma)$ time. Then we binary search the suffix array using the suffix array interval stored at the child, taking additional $O(m+\lg s)$ time. Summing up, the whole search process takes $O(m+\lg\lg\sigma)$ time.


\subsection{Predecessor Queries}
As an additional bonus, the above data structure can be easily augmented to identify the lexicographic predecessor of $P$ if it does not occur in the underlying text $T$. For this we need to construct additional predecessor data structures (again using Prop.~\ref{thm:predecessor}) storing \emph{all} edges outgoing from a node. We also augment each internal node with a pointer to the lexicographically largest leaf in its subtree.

For answering queries, we first try matching $P$ as described above. Now imagine a search for $P[i+1]$ fails at node $v$. Then we query the new predecessor data structure stored at $v$. This identifies a child $w$ of $v$ that is a prefix of the predecessor of $P$. From $w$ we follow the pointer to the rightmost leaf in $w$'s subtree; this is the predecessor of $P$. The time is $O(m+\lg\lg\sigma)$.

\section{New Dynamic Data Structure}

In this section we prove Theorem~\ref{thm:main2}; we call the resulting data structure the \emph{dynamic compacted trie search structure}.

Although we want all bounds to be worst-case, we first start with an amortized version, and then show how to make it worst-case. In both cases, we apply the powerful \emph{exponential search tree} paradigm of Andersson and Thorup~\cite{andersson07dynamic}. In the amortized setting, this results in a fairly simple method. Making it worst-case is a nontrivial challenge, though. Fortunately, instead of using their original deamortization approach, we can follow a significantly simpler method of Bender, Cole, and Raman~\cite{bender02exponential}. First we prove that without loss of generality it is enough to show how to maintain a tree of size $O(\sigma)$. Then we develop a new variant of weighted multiway search trees, which we call \emph{wexponential search trees}. Then we show how to use it to build the new data structure, first in the amortized setting, and then making the bounds worst-case efficient.

The worst-case version of the structure gives us Theorem~\ref{thm:main2}, while the simpler amortized variant achieves the following bounds.

\begin{theorem}
We can maintain a linear-size structure for a text over an alphabet of size $\sigma=n^{O(1)}$ under prepending a new letter in amortized time $O(\frac{\lg^{2}\lg\sigma}{\lg\lg\lg\sigma})$ so that on-line pattern matching concerning the current text can be answered in worst-case time $O(m+\frac{\lg^{2}\lg\sigma}{\lg\lg\lg\sigma})$, for patterns of length $m$, where all complexities are deterministic.
\end{theorem}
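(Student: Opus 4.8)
The plan is to maintain the suffix tree of the current text explicitly, so that prepending a letter $a$ adds exactly one new suffix: in the compacted tree this means locating the longest prefix of the new suffix already present, splitting at most one edge there, and hanging one new leaf edge off the resulting node. I would use Weiner's right-to-left construction as the deterministic suffix-tree oracle: it navigates by stored Weiner links and performs only $O(1)$ amortized \emph{logical} steps (edge or Weiner-link lookups, plus Weiner-link maintenance) per prepend, so once each logical step is realised by one dictionary access on a node, the oracle's total cost over the whole run stays within the per-prepend budget claimed — which is why no separate oracle term appears in the statement. It therefore suffices to support, at every node $v$ of the trie, a dynamic dictionary over $v$'s children keyed by the first character of the outgoing edge (with secondary pointers to the child and to the rightmost leaf below it), under lookup, predecessor, insertion, and the $O(1)$ structural edge split.

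Next I would carry out the reduction ``without loss of generality the tree has size $O(\sigma)$''. Classify nodes as \emph{heavy} (at least $\sigma$ leaves below) or \emph{light}, as in the suffix tray. The heavy nodes form a connected subtree with only $O(n/\sigma)$ \emph{branching} heavy nodes; each of those stores a plain size-$\sigma$ array over its children ($O(1)$ access, $O(n)$ total space, and since leaf counts only grow a node is branching-heavy at most once, so its $O(\sigma)$ array build costs $O(1)$ amortized per prepend over the whole run), while each non-branching heavy node keeps a pointer to its unique heavy child plus a dictionary over its light children with $O(\tfrac{\lg^{2}\lg\sigma}{\lg\lg\lg\sigma})$ query and update. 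A query path thus crosses the heavy part at $O(1)$ per node and then descends into \emph{at most one} light subtree; a light subtree has fewer than $\sigma$ leaves, hence $O(\sigma)$ nodes — this is the ``tree of size $O(\sigma)$''. When a light subtree reaches $\sigma$ leaves its root becomes heavy, which is a purely local relabelling: the subtree falls apart into the light subtrees rooted at the root's children, each keeping its dictionaries, and only the parent may have to switch to an array (again at most once per node).

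The heart is the dictionary inside a light subtree: at node $v$ I store its children in a \emph{wexponential search tree}, giving child $w$ the weight $\#\{\text{leaves below }w\}$, so the weights sum to $W_v<\sigma$. Built as a multiway weighted search tree whose fan-outs follow a doubly-exponential schedule and which uses the static $O(\tfrac{\lg\lg\sigma}{\lg\lg\lg\sigma})$-time predecessor structure of Proposition~\ref{thm:predecessor2} at each level, it admits an access descending to a weight-$w$ child in time $O(\lg\tfrac{\lg W}{\lg w}\cdot\tfrac{\lg\lg\sigma}{\lg\lg\lg\sigma})$ (with weights clamped away from $1$), plus $O(1)$. Along a root-to-node path inside the light subtree the weight of the child chosen at one level equals the total weight at the next, so by the telescoping property of wexponential trees the cost of \emph{all} accesses on the path is $O(\lg\lg\sigma\cdot\tfrac{\lg\lg\sigma}{\lg\lg\lg\sigma})=O(\tfrac{\lg^{2}\lg\sigma}{\lg\lg\lg\sigma})$, the residual $O(1)$ per node being absorbed into the $O(m)$ spent scanning matched characters and compacted-edge labels. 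Combining with the heavy part and the single $O(\tfrac{\lg^{2}\lg\sigma}{\lg\lg\lg\sigma})$ transition into it gives query time $O(m+\tfrac{\lg^{2}\lg\sigma}{\lg\lg\lg\sigma})$; for a predecessor query, if the match fails at $v$ one extra predecessor query in $v$'s dictionary plus a rightmost-leaf pointer finishes the job within the same bound.

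For updates and the amortization, an edge split creates one new node carrying a $2$-element dictionary and changes one pointer in the parent's dictionary, and the new leaf is a single insertion into one wexponential search tree, costing $O(\tfrac{\lg^{2}\lg\sigma}{\lg\lg\lg\sigma})$. The point I would be careful about is that a single prepend increments the leaf count of up to $\sigma$ ancestors; rather than reweighting them, I rebuild each light subtree from scratch with current weights once it has absorbed a constant fraction of its current size in new leaves, so stored weights are always within a factor $2$ of the truth — which perturbs every $\lg\tfrac{\lg W}{\lg w}$ by only $O(1)$ and hence leaves the telescoping intact — and rebuilding a $k$-leaf subtree costs $O(k\,\tfrac{\lg^{2}\lg\sigma}{\lg\lg\lg\sigma})$, i.e.\ $O(\tfrac{\lg^{2}\lg\sigma}{\lg\lg\lg\sigma})$ amortized per prepend (a subtree born large just inherits its parent's still-fresh dictionaries and schedules its first rebuild accordingly). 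Adding up the amortized costs — Weiner step, edge split, leaf insertion, light-subtree rebuild, and the $O(1)$-per-prepend promotions and array builds — gives amortized update time $O(\tfrac{\lg^{2}\lg\sigma}{\lg\lg\lg\sigma})$, and, all dictionaries being of linear total size, the whole structure occupies $O(n)$ space. The step I expect to be the main obstacle is the design and analysis of the wexponential search tree itself: picking the doubly-exponential weight schedule so that one access really costs $O(\lg\tfrac{\lg W}{\lg w}\cdot\tfrac{\lg\lg\sigma}{\lg\lg\lg\sigma})$ rather than $\Theta(\tfrac{\lg\lg\sigma}{\lg\lg\lg\sigma})$ at every level, and then checking that the telescoping survives both the constant-factor-stale weights and the hand-off of a light subtree at a promotion — the heavy/light bookkeeping and the Weiner-algorithm accounting are comparatively routine.
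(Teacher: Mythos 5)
Your overall architecture (Weiner's algorithm as the amortized oracle, heavy/light reduction to subtrees of size $O(\sigma)$, weighted exponential search trees with a telescoping analysis) is the same as the paper's, but the query-time analysis has a genuine gap. You charge each wexponential-tree access $O(\lg\frac{\lg W}{\lg w}\cdot\frac{\lg\lg\sigma}{\lg\lg\lg\sigma})$ \emph{plus $O(1)$}, and absorb the ``residual $O(1)$ per node'' into the $O(m)$ term. But a search in a wexponential search tree always begins with a query to the static predecessor structure at its top level, which costs $\Theta(\frac{\lg\lg\sigma}{\lg\lg\lg\sigma})$ even when the child found carries almost all of the weight (the stated bound should be read as $O((1+\lg\frac{\lg W}{\lg w})\frac{\lg\lg u}{\lg\lg\lg u})$); no $O(1)$ fast path comes for free from the weighting. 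Since in your scheme \emph{every} child of every node inside a light subtree lives in its parent's wexponential tree, a descent of length $m$ inside a light subtree (which can have depth $\Theta(\sigma)$) costs $\Theta(m\cdot\frac{\lg\lg\sigma}{\lg\lg\lg\sigma})$, not $O(m+\frac{\lg^2\lg\sigma}{\lg\lg\lg\sigma})$. The paper avoids exactly this by storing, at each node, the edges to children of the \emph{same level} in an $O(1)$-lookup deterministic static dictionary (Prop.~\ref{thm:hashing}) and putting only the children of strictly smaller level into the wexponential tree; then an expensive query occurs only when the level strictly drops, i.e.\ $O(\lg\lg\sigma)$ times per path, and these costs telescope. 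Maintaining those static dictionaries under insertions is what forces the level/fragment/tail/promotion machinery that constitutes most of Section~\ref{sec:amortized search}, and your proposal has no substitute for it.

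A second, lesser problem is your weight maintenance. Rebuilding a light subtree ``once it has absorbed a constant fraction of its current size in new leaves'' does not keep stored weights within a factor $2$ of the truth: all new leaves may fall under one deep node whose stored weight then lags by an unbounded factor before the subtree-global threshold fires, and with underestimated $w'_i$ the inequality $\lg W'_{i+1}=O(\lg w'_i)$ needed for telescoping fails. (The paper instead tolerates only the one-sided relaxation $w'\in[\sqrt{w},w]$, enforced per node via fragments and their counters.) Also, your claimed $O(k\,\frac{\lg^2\lg\sigma}{\lg\lg\lg\sigma})$ from-scratch rebuild with exact weights is not supported by the wexponential interface, which only offers weight-$1$ insertion and unit increments; the sum of subtree weights over a $k$-node subtree can be $\Theta(k^2)$, which is why the paper inserts a weight-$w$ element by only $\sqrt{w}$ increments during promotions. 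These maintenance details would have to be reworked along the paper's lines (or by a new argument) before the amortized update bound stands.
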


\begin{proof}
We maintain the suffix tree. As we aim for amortized time bounds, we can implement the suffix tree oracle using Weiner's algorithm. This requires storing for each node the reverse suffix links in a dynamic predecessor structure implemented using Prop.~\ref{thm:predecessor3} and gives (amortized) $f(n,\sigma)=O(\frac{\lg^{2}\lg\sigma}{\lg\lg\lg\sigma})$. Using such oracle we locate the edge and update it using the amortized version of our structure (Sect.~\ref{sec:amortized search}) in $O(\frac{\lg^2\lg\sigma}{\lg\lg\lg\sigma})$ time.
\end{proof}

\subsection{Reducing the Tree}

In this subsection we show how to reduce the general case of maintaining a tree of size $n$ to maintaining a collection of smaller trees of size $O(\sigma)$. More precisely, we prove the following lemma.

\begin{lemma}
\label{thm:reduction}
Maintaining a tree of size $n$ can be reduced to maintaining a collection of smaller trees with linear total size and each tree being of size $O(\sigma)$ such that each query concerning the original tree can be either answered immediately or translated into at most one query about one of the smaller trees, and each update in the original tree can be translated into a constant number of updates in some of the smaller trees. The former translation will be performed in deterministic worst-case $O(m+\frac{\lg^{2}\lg \sigma}{\lg\lg\lg \sigma})$ time, and the latter in $O(\frac{\lg^{2}\lg \sigma}{\lg\lg\lg \sigma})$.
\end{lemma}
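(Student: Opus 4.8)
The plan is to apply a standard heavy-path-style decomposition, but tuned to the alphabet parameter $\sigma$ rather than to $n$. I would reuse the heavy/light classification already introduced for the static structure, but with a \emph{smaller} threshold, namely declaring a node \emph{heavy} if its subtree contains at least $\sigma$ leaves (or, more conveniently, recursively cutting the tree so that every resulting small tree has $\Theta(\sigma)$ leaves). Concretely: whenever the current tree has more than $c\sigma$ leaves, pick a node $v$ whose subtree has between $\sigma$ and $2\sigma$ leaves (such a $v$ always exists by walking down from the root into the child with the most leaves), detach the subtree rooted at $v$, replace it in the parent tree by a single placeholder leaf carrying a pointer to the detached tree, and recurse on both pieces. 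Each piece then has $O(\sigma)$ leaves, hence $O(\sigma)$ nodes (compacted trie), and the placeholder leaves add only a constant factor, so the total size over all small trees is $O(n)$. The "boundary" nodes $v$ where we cut are what link the whole collection into a tree-of-trees of depth $O(\log_\sigma n)$.

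Next I would describe how queries and updates translate. A pattern search walks down, as usual, consuming one character of $P$ per edge; within one small tree this is exactly the search problem we will solve later with wexponential search trees. When the search reaches a placeholder leaf, we either have matched all of $P$ (answer immediately from the interval/leaf pointer stored there) or we follow the single pointer into the child tree and continue the search there — but crucially, because we descend through placeholder leaves, the search touches at most one small tree \emph{per level}, and the levels are traversed in the natural order of $P$'s characters, so no character is paid for twice. The total work is $O(m)$ for the character matching plus the per-tree search overhead; I will argue the search overhead is incurred only once, not once per level, by the telescoping property of wexponential search trees promised in the technical-contributions section (the weight "$w$" of the subtree we descend into at one level becomes the weight "$W$" at the next). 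This is exactly the place where the reduction is designed to hand off to the wexponential machinery, so here I only need: each query is either answered on the spot or becomes \emph{one} query into \emph{one} small tree, spending $O(m+\frac{\lg^2\lg\sigma}{\lg\lg\lg\sigma})$ on the translation itself (navigating placeholders, following top-level predecessor/hash structures over the boundary nodes).

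For updates — prepending a letter, i.e.\ splitting one edge and attaching one new leaf — the affected edge lies in a single small tree $S$; we split it there and add the new leaf there, which is a constant number of updates inside $S$. The one subtlety is rebalancing: the new leaf may push some subtree in $S$ over the $2\sigma$ threshold, or make an ancestor in the parent tree cross a threshold. I would handle this by the usual amortized-then-deamortized trick: when a small tree's leaf count leaves the window $[\sigma, 2\sigma]$, re-split it; since a tree must gain or lose $\Omega(\sigma)$ leaves between consecutive re-splits and a re-split costs $O(\sigma)$, this is $O(1)$ amortized per update, and worst-case by incremental rebuilding (maintain the old and new decompositions side by side and migrate a constant number of nodes per operation). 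Each re-split also inserts/deletes a constant number of boundary nodes in the top-level structures, which are themselves maintained with the dynamic predecessor structure of Prop.~\ref{thm:predecessor3} at cost $O(\frac{\lg^2\lg\sigma}{\lg\lg\lg\sigma})$. So an update becomes a constant number of small-tree updates plus $O(\frac{\lg^2\lg\sigma}{\lg\lg\lg\sigma})$ overhead.

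The main obstacle I anticipate is not the decomposition itself but making the \emph{query translation} cost genuinely $O(m + \frac{\lg^2\lg\sigma}{\lg\lg\lg\sigma})$ rather than $O(m)$ times the depth $O(\log_\sigma n)$: a query that walks through $\Theta(\log_\sigma n)$ small trees must not pay a full $\frac{\lg^2\lg\sigma}{\lg\lg\lg\sigma}$ predecessor-style overhead at each level. The statement as phrased ("translated into at most one query about one of the smaller trees") is what lets me dodge this — I will argue that the \emph{interior} traversal through placeholders uses only $O(1)$-time pointer hops (each placeholder leaf stores a direct pointer into its child tree, and the character that selected that placeholder was already consumed), so the only nontrivial search happens in the final small tree that $P$ actually ends in; everything above it is pure pointer-chasing plus character comparison. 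Verifying that this pointer-chasing is indeed $O(1)$ per level — in particular that finding the right outgoing edge at a boundary node does not secretly require a predecessor query at every level — is the delicate point, and it is resolved precisely by storing, at each cut node, the placeholder as an ordinary child edge inside the parent small tree, so that it is found by that tree's own search structure as a byproduct of the single search we are allotted.
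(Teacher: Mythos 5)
Your decomposition into a tree-of-trees of $\Theta(\sigma)$-leaf pieces of depth $O(\log_\sigma n)$ does not deliver the central guarantee of the lemma, namely that a query is either answered immediately or becomes at most \emph{one} query to \emph{one} small tree. A pattern of length $m$ can pass through $\Theta(\log_\sigma n)$ pieces, and inside every interior piece the walk must locate the correct outgoing edge at each branching node of that piece. You cannot give every branching node a $\sigma$-sized array (there are $\Theta(n)$ branching nodes in total, i.e.\ $\Theta(n\sigma)$ space), and no deterministic worst-case $O(1)$-lookup \emph{dynamic} dictionary is available, so this is a genuine search in each traversed piece -- your own fix (``found by that tree's own search structure'') concedes exactly this. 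Consequently the additive $O(\frac{\lg^{2}\lg\sigma}{\lg\lg\lg\sigma})$ overhead is paid once per piece and multiplies by the depth. The telescoping you invoke cannot rescue this: it applies within one weighted structure along a chain of nested weights (the weight found at one step bounds the total weight at the next), whereas each new piece you enter has fresh total weight $\Theta(\sigma)$, unrelated to the weight of the placeholder leaf found in the previous piece. Moreover, appealing to the wexponential trees here is out of scope: the lemma is a black-box reduction, stated and proved independently of how the small trees will be implemented.

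The paper's proof avoids the multiplicative depth factor by not cutting the top of the tree into pieces at all. It keeps a single connected heavy part (nodes with at least $\sigma$ leaves are heavy, those with at most $\sigma/2$ are light), navigated at $O(1)$ per node: only branching heavy nodes -- of which there are $O(n/\sigma)$ -- store arrays of size $\sigma$, and nonbranching heavy nodes store a single pointer to their unique heavy child. Each node additionally keeps one dynamic predecessor structure (Prop.~\ref{thm:predecessor3}) over its children, but it is queried only once, at the single node where the pattern falls off the heavy part, to locate the unique light subtree (of size $O(\sigma)$) to which the query is handed; this is what yields $O(m+\frac{\lg^{2}\lg\sigma}{\lg\lg\lg\sigma})$ with at most one small-tree query. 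On the update side, the paper increments a counter at the root of the affected light subtree and, when it reaches $\sigma$, promotes all nodes with at least $\sigma/2$ leaves to heavy, deamortized explicitly over $\sigma/2$ insertions in three incremental phases (building the new parent array, a size-computing traversal, and a promoting traversal that rebuilds the affected small trees). Your rebalancing sketch (``re-split when a piece leaves $[\sigma,2\sigma]$, migrate a constant number of nodes per operation'') would need a comparably concrete scheme; in your decomposition a re-split moves cut points and changes placeholder identities in ancestor pieces, so the claim that it touches only a constant number of boundary nodes, each at cost $O(\frac{\lg^{2}\lg\sigma}{\lg\lg\lg\sigma})$, is not justified as stated.
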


While we aim for worst-case bounds, we start with an amortized version of the above formulation, which is easier to describe, and then explain how to deamortize it. In both cases the idea will be, again, to divide the nodes into heavy and light with parameter $s:=\sigma$, but now we slightly modify the definition so that all nodes with at least $s$ leaves are heavy, all nodes with at most $\frac{s}{2}$ leaves are light, and the remaining nodes can be either light or heavy.

For all light nodes such that the parent is heavy, we maintain a separate smaller tree of size $O(\sigma)$, and explicitly store its number of leaves. For each node $v$ we store a dynamic predecessor structure implemented using Prop.~\ref{thm:predecessor3} storing all of its children. If the node is branching, we keep all its heavy children in an array $A_{v}$ of size $\sigma$, and otherwise we simply keep a pointer to the unique heavy child of $v$, if any. Given a query, we simply use the arrays or pointers to descend as far as it is possible, and then use the dynamic predecessor structure to either locate the small tree where we should continue, or terminate. (So far, the ideas are similar to the static case.) For updates, each light node stores a pointer to the root of the small tree it belongs too. Then after adding a new leaf we follow the pointer to $r$ and increase the counter there. As soon as it becomes equal to $s$, we traverse the whole subtree rooted there and make each node with at least $\frac{s}{2}$ leaves heavy and updating pointers to the root for all nodes that have less of them. Observe that all new heavy nodes are nonbranching, but it might happen that $r$'s parent becomes branching and we need to construct its array. The traversal takes $O(s)$ time, building the array can be done in $O(\sigma)$ time, hence the whole complexity is $O(s)$. We can amortize it by maintaining an invariant that each light node with a heavy parent has $\max(0,\ell-\frac{s}{2})$ credits available, where $\ell$ is the number of leaves in the corresponding subtree. Whenever we insert a new leaf, we need to allocate one new credit. After each traversal we get rid of all light nodes with at least $\frac{s}{2}$ leaves, hence we are free to use all $s$ credits to amortize the expensive operation.

Deamortizing the above method can be done by performing the traversal (and constructing the array, if necessary) incrementally over $\frac{s}{2}$ insertions in the small tree rooted at $r$, starting when its counter reaches $s$. The whole procedure consists of 3 phases. 
\begin{enumerate}

\item We construct the array $A$ for $r$'s parent over $\frac{s}{6}$ insertions, which requires allocating and initializing the memory, and then iterating through all of $r$'s brothers. Note that while we are doing so, new brothers can appear, hence we should keep all children of a node in an unsorted linked list, and append each new child there. The speed of the initialization and iteration should be chosen so that we have enough time to process up to $\sigma$ elements. It might happen that more than one child of a node wants to start the iteration, so each node should have an unique process which constructs the array, and whenever an insertion occurs, we execute a constant number of steps at the parent's process.


\item We run a depth-first search to determine the sizes of all subtrees rooted at the nodes of the small tree. The search is performed incrementally over $\frac{s}{6}$ insertions, again, with the speed of the simulation adjusted so that there is enough time to process a tree of size $\frac{4}{3}s$, as new leaves can appear while we are still in the middle of the traversal, and we might already have $\frac{7}{6}s$ leaves in the subtree at $r$ to start with.

\item We again traverse the tree rooted at $r$ in a depth-first fashion, and make all nodes of size exceeding $\frac{s}{2}$ heavy, where by size we actually mean the size computed in the previous phase. During this traversal we also update the pointers to the root of the small tree a given node belongs to, and rebuild all those smaller trees. More precisely, at node $v$ we set its pointer and insert $v$ into the corresponding smaller tree in the same step. Again, the speed of the simulation is adjusted so that we can process a tree of size $\frac{3}{2}s$ over $\frac{s}{6}$ insertions. Note that as the rebuilding is done in an incremental manner, it might happen that we will update the old smaller tree while the simulation is still running. Nevertheless, in such a case we will sooner or later encounter the place where the update occurred, and change the corresponding new smaller tree accordingly. Notice that all new small trees are of size less than $s$ till the very end of the simulation.
\end{enumerate}

Each query requires probing at most $O(m)$ static dictionaries, one dynamic predecessor structure, and at most one small tree. Each update, on the other hand, requires inserting a new key to one dynamic predecessor structure, and updating a constant number of smaller trees, hence the claimed time bounds follow. The linear space bound follows from the observation that each of the original nodes occurs in at most two smaller trees.

\subsection{Weighted Exponential Trees}

In this section we prove the following theorem, which might be of independent interest.

\begin{theorem}
\label{thm:wexponential}
There is a linear-size data structure that allows storing a collection of weighted sorted keys from an ordered universe of size $u$ so that predecessor search takes $O(\lg\frac{\lg W}{\lg w}\frac{\lg\lg u}{\lg\lg\lg u})$ time, where $W$ is the current total weight of all elements, and $w$ is the weight of the predecessor found. Inserting a new element of weight $1$ takes $O(\lg\lg W)$ time, and increasing by one the weight of an element of weight $w$ takes $O(\lg\frac{\lg W}{\lg w})$ time. All bounds are deterministic worst-case.
\end{theorem}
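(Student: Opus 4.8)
The plan is to build a weighted analogue of the Andersson--Thorup exponential search tree in which the telescoping exponents are governed by \emph{weights} rather than subtree sizes. Concretely, each tree node $v$ is responsible for a contiguous block of the sorted keys whose total weight is $N_v$, and I maintain the invariant that every child of $v$ has weight at most $\sqrt{N_v}$: a key whose own weight exceeds $\sqrt{N_v}$ becomes a leaf-child of $v$ directly, while the remaining (light) keys in $v$'s block are partitioned greedily into groups of weight at most $\sqrt{N_v}$, each becoming the block of a recursively built child subtree. Greedy packing, together with a merging cleanup after insertions, guarantees that two consecutive light groups are never both of weight below $\tfrac12\sqrt{N_v}$, so a node of weight $N$ has only $O(\sqrt N)$ children. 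At each internal node I store a static Beame--Fich predecessor structure (Prop.~\ref{thm:predecessor2}) over the $O(\sqrt{N_v})$ splitter keys separating the children; this costs $O(N_v^{(1+\epsilon)/2})$ space and construction time and answers ``which child'' in $O(\frac{\lg\lg u}{\lg\lg\lg u})$ time.

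The depth bound is the heart of the construction. Since block weights shrink from $W$ at the root as $W,\le\!\sqrt W,\le\! W^{1/4},\dots,\le\! W^{2^{-\ell}},\dots$ along any root-to-node path, a key of weight $w$ is extracted as a leaf at the first level where the current block weight drops below $w^2$, i.e. at depth $O(\lg\frac{\lg W}{\lg w})$; moreover no key is ever placed deeper than that (a freshly inserted light singleton group may temporarily sit shallower than expected, but that only helps the upper bound). Hence a predecessor search, which descends to the leaf holding the predecessor $y$ of weight $w$, visits $O(\lg\frac{\lg W}{\lg w})$ nodes and spends $O(\frac{\lg\lg u}{\lg\lg\lg u})$ at each, giving the claimed $O(\lg\frac{\lg W}{\lg w}\cdot\frac{\lg\lg u}{\lg\lg\lg u})$. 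The same per-level weight bound, together with $\sum_{v\text{ at level }\ell}N_v\le W$ and $N_v\le W^{2^{-\ell}}$, makes $\sum_v N_v^{(1+\epsilon)/2}$ a series that decays doubly-exponentially in the level counted from the bottom up, summing to $O(W)$ for any $\epsilon<1$; this yields linear total space.

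For the dynamic operations I would first describe an amortized version with lazy rebuilding --- whenever a node's weight grows past twice the value at which its static structure was last built, or a child's weight crosses the $\sqrt{N_v}$ threshold, rebuild or split locally, charging the $O(N_v^{(1+\epsilon)/2})$ rebuild cost to the $\Omega(N_v)$ unit weight increases that must have occurred --- and then deamortize it following Bender, Cole, and Raman~\cite{bender02exponential}: each node runs at most one background rebuild at a time, advanced by $O(1)$ steps whenever an operation passes through it, with the old structure answering queries until the new one is ready. An insertion, given the position of the new weight-$1$ key among the current keys as in our applications, splices a new bottom leaf and walks up the $O(\lg\lg W)$ ancestors updating the counters $N_v$ and nudging their rebuild processes, for $O(\lg\lg W)$ total and, crucially, with no predecessor-search term. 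A weight increase, given a pointer to the element (which sits at depth $O(\lg\frac{\lg W}{\lg w})$), walks up that many ancestors updating counters; the new weight can cross at most one threshold $\sqrt{N_{\mathrm{parent}}}$, since after one promotion the element is comfortably light at its new level --- the child-weight invariant forces the grandparent's weight to be at least the square of the parent's --- so the cost is $O(\lg\frac{\lg W}{\lg w})$. A global rebuild handles the slow growth of $W$, and hence of the number of levels, within the same per-operation budget.

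The main obstacle I expect is the worst-case deamortization: ensuring that cascading splits along an insertion path, overlapping background rebuilds at a node and its ancestors, the transient shallow singleton groups and their cleanup merges, and the periodic global rebuild triggered by $W$ doubling can all be scheduled so that every individual operation does only $O(1)$ restructuring work per visited node. This is precisely where the simpler framework of Bender--Cole--Raman, rather than the original deamortization of Andersson--Thorup, keeps the bookkeeping manageable. Checking that the depth invariant survives every such local transformation --- in particular that no key is ever pushed below depth $O(\lg\frac{\lg W}{\lg w})$ --- is the routine but essential remaining part.
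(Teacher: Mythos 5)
Your overall strategy is the same as the paper's (a weighted exponential search tree with doubly-exponentially decaying weight thresholds, static Beame--Fich structures at internal nodes, the output-sensitive depth bound $O(\lg\frac{\lg W}{\lg w})$, and amortize-then-deamortize via Bender--Cole--Raman), but your concrete parameter choice breaks the charging argument at its crucial point. With the invariant ``every child of $v$ has weight at most $\sqrt{N_v}$,'' the degree of $v$ and the weight of a child of $v$ are both $\Theta(\sqrt{N_v})$. A child split, or the promotion of a key across the $\sqrt{N_v}$ threshold, is therefore triggered after only $\Omega(\sqrt{N_v})$ unit weight increases below that child (a group created at weight $\approx\frac12\sqrt{N_v}$ after a split, or a key that became a leaf-child when its weight was only about $N_v^{1/4}$, needs merely $O(\sqrt{N_v})$ further increments to force the next restructuring), whereas accommodating the new splitter in $v$'s static structure costs $\Theta\bigl((\sqrt{N_v})^{1+\epsilon}\bigr)=\Theta\bigl(N_v^{(1+\epsilon)/2}\bigr)$ by Prop.~\ref{thm:predecessor2}. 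Your claim that the rebuild can be charged to ``the $\Omega(N_v)$ unit weight increases that must have occurred'' is only true for the weight-doubling trigger, not for the child-threshold trigger; for the latter the amortized cost per increment is $N_v^{\epsilon/2}$, which is polynomial in the weight for every fixed $\epsilon>0$, so neither the $O(\lg\frac{\lg W}{\lg w})$ increase bound nor its deamortized worst-case version follows. No choice of $\epsilon>0$ in Beame--Fich rescues the square-root invariant, and replacing it by the linear-construction structure of Prop.~\ref{thm:predecessor} would forfeit the $\lg\lg\lg u$ factor in the query time.

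This is exactly why the paper decouples the degree from the child weight: with levels defined by $f(\ell)=\lfloor 2^{(3/2)^{\ell}}\rfloor$, a tree of level $\ell+1$ has $O(f(\ell+2)/f(\ell+1))=O(f^{1/2}(\ell+1))$ splitters but children of weight $\Theta(f(\ell+1))$, so a child split requires $\Omega(f(\ell+1))$ weight growth while rebuilding the splitter structure with $\epsilon=1$ costs only $O\bigl((f^{1/2}(\ell+1))^{2}\bigr)=O(f(\ell+1))$ --- the trigger pays for the rebuild, and the credits $\max(0,W-f(\ell+1))$ make this precise. Your construction can be repaired in the same spirit by enforcing ``children of weight at most $N_v^{2/3}$, degree $O(N_v^{1/3})$'' (or generally child weight at least degree to the power $1+\epsilon$); the depth of a weight-$w$ key remains $O(\lg\frac{\lg W}{\lg w})$ since only the base of the outer logarithm changes, and the rest of your argument (search cost, space for $\epsilon<1$, insert/increase walking $O(\lg\lg W)$ resp.\ $O(\lg\frac{\lg W}{\lg w})$ ancestors given a handle, incremental background rebuilds \`a la Bender--Cole--Raman) then goes through along the same lines as the paper's proof.
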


Whenever we insert a new element, we get its handle in return, which then can be used to increase the weight by one. If we do not have this handle, increasing the weight requires performing a search first.

The proof of the theorem is based on the beautiful idea of Bender, Cole, and Raman~\cite{bender02exponential}, who have shown how to significantly simplify the deamortization presented by Andersson and Thorup~\cite{andersson07dynamic}. We start with the amortized version of the theorem, and later show how to make it worst-case efficient.

Let $f(\ell)=\lfloor 2^{(\frac{3}{2})^{\ell}}\rfloor$, so $\ell=\Theta(\lg\lg f(\ell))$. We define a weighted multiway search tree with the degrees increasing doubly-exponentially along any leaf-to-root path, which we call a wexponential search tree of level $\ell$, in the following recursive manner:
\begin{enumerate}
\item the (explicitly stored) current total weight $W$ of all elements is less than $2f(\ell+1)$, and if $W\geq 2f(\ell)$ the tree is \emph{proper},\label{enum:total}
\item we store a static predecessor search structure (implemented using Prop.~\ref{thm:predecessor2}) containing a subset $S=\{e_{1},\ldots,e_{|S|}\}$ of all elements, which we call the \emph{splitters},
\item the remaining elements are split into $X_{0},\ldots,X_{|S|}$ such that $e_{i}$ is between $X_{i-1}$ and $X_{i}$,
\item the total weight of all elements in $\{e_{i}\}\cup X_{i}\cup\{e_{i+1}\}$ exceeds  $f(\ell)-f(\ell-1)$ for all $i=1,2\ldots,|S|-1$,
\item each $X_{i}$ is stored in a wexponential search tree of level $\ell-1$, which are called the \emph{children},\label{enum:children}
\item for each $i$ the predecessor structure stores a bidirectional link to the child storing $X_{i}$, and additionally a link to the leftmost child storing $X_{0}$ is kept.
\end{enumerate}
Observe that if a weight of an element is at least $2f(\ell)$,
it must belong to $S$. Note also that the definition requires that some of the $X_i$'s may be empty.
Furthermore, we can bound the size of $S$ as follows:
\begin{equation}\label{eqn:splitters}
|S| \leq 2\frac{2f(\ell+1)}{f(\ell)-f(\ell-1)} = O(2^{(\frac{3}{2})^{\ell+1}-(\frac{3}{2})^{\ell}})=O(2^{\frac{1}{2}(\frac{3}{2})^{\ell}})=O(f^{\frac{1}{2}}(\ell)) .
\end{equation}
Updating the structure will be done in a bottom-up order. In other words, we will assume that the children are valid wexponential search trees of level at most $\ell-1$, and show how to ensure that their parent is a valid tree  of level $\ell$.  Inserting a new element of weight one or increasing the weight of some element by one might cause the total weight to become $2f(\ell+1)$. As soon as we detect such a situation, we \emph{split} the tree into two by choosing an element $e_{i}$ from $S$. To choose this element we look at the sets of its predecessors $P_{i}=X_{0}\cup\{e_{1}\}\cup\ldots\cup\{e_{i-1}\}\cup X_{i-1}$ and successors $S_{i}=X_{i}\cup\{e_{i+1}\}\cup\ldots\cup\{e_{|S|}\}\cup X_{|S|}$. As the total weight is $2f(\ell+1)$, and the weight of any $X_{i}$ is less than $2f(\ell)$ (by conditions \ref{enum:children} and \ref{enum:total}), we can always select an $i$ so that the weight of both $P_{i}$ and $S_{i}$ is less than $f(\ell+1)+f(\ell)$,
but the weight of both $P_{i}\cup\{e_{i}\}$ and $\{e_{i}\}\cup S_{i}$ is at least $f(\ell+1)-f(\ell)$; see Fig.~\ref{fig:split}. We construct two new wexponential search trees of level $\ell$ containing all elements in $P_{i}$ and $S_{i}$, which is possible as their total weights are at most $f(\ell+1)+f(\ell)<2f(\ell+1)$. This requires constructing static predecessor search structures containing $e_{1},e_{2},\ldots,e_{i-1}$ and $e_{i+1},e_{i+2},\ldots,e_{|S|}$, respectively, and making each wexponential search tree of level $\ell-1$ a child of the former or the latter new tree. Notice that we don't have to rebuild the smaller trees, as simply redirecting the pointers to already existing structures is enough. Then we look at the parent of the structure that we are splitting. If there is none, we simply create a new proper wexponential search tree of level $\ell+1$ with just one splitter $e_{i}$. Otherwise we add $e_{i}$ to its set of splitters, and store pointers to the two newly created trees of level $\ell$ in its predecessor structure, which needs to be rebuilt, or \emph{refreshed}. The whole splitting and refreshing process is a very local procedure, as it requires rebuilding the static predecessors structures only for the tree and its parent, while the descendants and further ancestors are kept intact.

\begin{figure}
\centering
\includegraphics[scale=1]{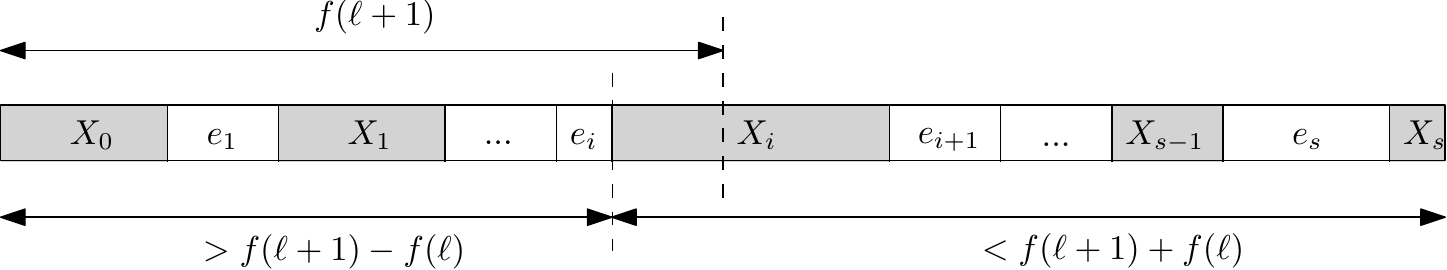}
\caption{Finding a suitable new splitter $e_{i}$. The smaller of the two parts has size at least $f(\ell+1)-f(\ell)$, and hence the larger one at most $f(\ell+1)+f(\ell)$.}
\label{fig:split}
\end{figure}

Now we can describe how to query and update the wexponential search tree.
\begin{description}
\item[predecessor search:] First we use the static predecessor search structure, which takes $O(\frac{\lg\lg u}{\lg\lg\lg u})$. If the query element belongs to $S$, we are done. Otherwise we recurse in the smaller structure.

\item[insert:] Using the static predecessor structure we locate the smaller structure the new element belongs to, and insert it there recursively. Then we increase $W$ by one and split the tree, if necessary. For each new element we allocate a record storing a link to the tree where we have used it as a splitter, and return a pointer to this record. Whenever some static predecessor search structure is rebuilt, or an element is moved up to the parent, we update the record.

\item[increasing:] We locate the tree where the element is a splitter using the record in constant time. Then we increase the total weight by one and split the tree, if necessary, and move up to its parent.
\end{description}

To analyze the complexity of this implementation, we first need a simple lemma.

\begin{lemma}
Consider a proper wexponential search tree of total weight $W$ and an element of weight $w$. The element is used as a splitter at depth $O(\lg\frac{\lg W}{\lg w})$.
\end{lemma}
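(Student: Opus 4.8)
The plan is to exploit that in a wexponential search tree the depth of a sub-tree of level $\ell$ is exactly $L-\ell$, where $L$ is the level of the root (this is immediate from condition~\ref{enum:children}, since children have level one less than their parent). Thus it suffices to (i) pin down $L$ in terms of $W$, (ii) lower-bound, in terms of $w$, the level $\ell_{0}$ of the (unique) tree in which the given element $x$ sits as a splitter, and (iii) divide the resulting doubly-exponential estimates against each other. For (i): the root is \emph{proper}, so by condition~\ref{enum:total} its weight satisfies $2f(L)\le W<2f(L+1)$; substituting $f(\ell)=\lfloor 2^{(3/2)^{\ell}}\rfloor$ and using $\lfloor y\rfloor\ge y/2$ for $y\ge 2$, the left inequality already gives $W\ge 2^{(3/2)^{L}}$, i.e.\ $(3/2)^{L}\le \lg W$ (and the right one gives $(3/2)^{L+1}>\lg W-1$, so $(3/2)^{L}=\Theta(\lg W)$ and $L=\Theta(\lg\lg W)$, though only the upper bound on $(3/2)^{L}$ will actually be needed).

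For (ii): by construction each element is a splitter of exactly one tree; let $t_{0}$, of level $\ell_{0}$, be that tree for $x$, so $x$ sits at depth $L-\ell_{0}$. If $t_{0}$ is the root there is nothing to prove, so assume it has a parent $t_{1}$, of level $\ell_{0}+1$. Since $x$ is not a splitter of $t_{1}$, by condition~\ref{enum:children} it belongs to one of the sets $X_{j}$ of $t_{1}$ — necessarily the one whose storage tree is $t_{0}$ itself — whence $w\le W(X_{j})=W(t_{0})$. But $t_{0}$ is a tree of level $\ell_{0}$, so condition~\ref{enum:total} gives $W(t_{0})<2f(\ell_{0}+1)\le 2^{\,1+(3/2)^{\ell_{0}+1}}$, and therefore $\lg w<1+\tfrac{3}{2}(3/2)^{\ell_{0}}$, i.e.\ $(3/2)^{\ell_{0}}>\tfrac{2}{3}(\lg w-1)=\Omega(\lg w)$.

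Combining (i) and (ii),
\[
 (3/2)^{\,L-\ell_{0}}=\frac{(3/2)^{L}}{(3/2)^{\ell_{0}}}\le\frac{\lg W}{\tfrac{2}{3}(\lg w-1)}=O\!\Big(\frac{\lg W}{\lg w}\Big),
\]
and taking $\log_{3/2}$ of both sides yields depth $L-\ell_{0}=O\!\big(\lg\tfrac{\lg W}{\lg w}\big)$, which is the claim. (If $w=O(1)$ the factor $\lg w-1$ is no longer bounded away from $0$; there one simply falls back on the trivial $L-\ell_{0}\le L=O(\lg\lg W)$, which is consistent with reading $\lg\tfrac{\lg W}{\lg w}$ as being at least a constant.)

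The one place that needs care — everything else being routine manipulation of $f$ and its floor — is step (ii): the point is the invariant that the weight of a single element never exceeds the total weight of the level-$\ell_{0}$ sub-tree nesting it, which is precisely what condition~\ref{enum:total} bounds; note this uses only condition~\ref{enum:total}, which holds for \emph{every} tree, and not properness of the sub-trees, since a child of a wexponential search tree need not be proper.
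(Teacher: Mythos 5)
Your proof is correct and follows essentially the same route as the paper's: you upper-bound the root level via properness ($W\ge 2f(L)$ gives $(3/2)^L\le\lg W$), lower-bound the level $\ell_0$ where the element is a splitter via the weight cap $w<2f(\ell_0+1)$ from condition~\ref{enum:total} (the paper phrases this as ``an element of weight at least $2f(\ell)$ must belong to $S$''), and subtract the two level bounds. The only difference is that you spell out the constants and the small-$w$ corner case explicitly, which the paper leaves implicit.
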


\begin{proof}
As the tree is proper, its level is $\Theta(\lg\lg W)$. On the other hand, if an element of weight $w$ belongs to a subtree of level $\ell$, but is not chosen as a splitter there, then $w< 2f(\ell)$,
which is equivalent to $\ell = \Omega(\lg\lg w)$. Hence the maximum possible difference between the level of the whole tree and the level of the subtree where the element is used as a splitter is $O(\lg\frac{\lg W}{\lg w})$.
\end{proof}

The above lemma shows that worst-case complexity of predecessor search in a wexponential search tree is $O(\lg\frac{\lg W}{\lg w}\frac{\lg\lg u}{\lg\lg\lg u})$, where $W$ is the total current weight and $w$ is the weight of the predecessor found. Indeed, consider this (unique) predecessor: it must be stored at depth $O(\lg\frac{\lg W}{\lg w})$, and traversing each level requires one query to the static predecessor structure. The complexity of both insert and increase is more tricky to estimate, as we might need to repeat the expensive splitting procedure a couple of times. Nevertheless, insert traverses all $O(\lg\lg W)$ levels, and increase traverses just $O(\lg\frac{\lg W}{\lg w})$ levels. At each of those levels we might need to split the tree, which takes a lot of time, but cannot happen very often. We start with an amortized bound.

For each wexponential tree we maintain an invariant that we have at least $\max(0,W-f(\ell+1))$ credits allocated there, where $W$ is the current total weight and $\ell$ is the level. As long as we don't split, the invariant is easy to maintain: whenever we move from a tree to its parent during an insert or increase and add one to its total weight, we put an additional credit there. Now consider splitting a tree of total weight $W=2f(\ell+1)$. We need to rebuild the static predecessor structures at both the tree (or, more precisely, at the two new trees) and its parent, hence we need to apply Prop.~\ref{thm:predecessor2} to a set of size which we bounded in (\ref{eqn:splitters}) by $f^{\frac{1}{2}}(\ell+1)$, which takes $O(f(\ell+1))$ time. On the other hand, we have $W-f(\ell+1)=f(\ell+1)$ credits available, and for each of the two new trees we need to keep just $\max(0,f(\ell+1)+f(\ell)-f(\ell+1))=f(\ell)$ of them (recall that the larger of the new trees has weight at most $f(\ell+1)+f(\ell)$). Hence we can use the remaining $f(\ell+1)-2f(\ell)=\Theta(f(\ell+1))$ credits to pay for the reconstruction.

As usual, deamortizing the running time requires more care. Fortunately, we can fairly closely follow the method of Bender, Cole, and Raman~\cite{bender02exponential}.  The only part of the implementation that is not worst-case efficient is splitting. Hence instead of immediately splitting a tree as soon as its weight becomes $2f(\ell+1)$, we perform it incrementally over $\frac{1}{2}f(\ell)$ updates concerning the tree or one of its descendants, starting when the weight becomes $2f(\ell+1)$.
Furthermore, instead of refreshing the parent as soon as we have two new trees, we use the bidirectional pointer to replace the link to the old tree kept there by a record containing the element $e_{i}$ used for partitioning and the links to the new two trees. As long as the parent is not fully refreshed and the new trees are still linked from the same record, we call them \emph{twins}.
Similarly, refreshing is performed incrementally over $\frac{1}{2}f(\ell)$ updates concerning the tree or one of its descendants, starting whenever we notice that the weight is a multiple of $\frac{1}{2}f(\ell)$ not exceeding $2f(\ell+1)-\frac{1}{2}f(\ell)$. This ensures that we never need to split a twin, as between splitting a child and splitting one of the two new children created as a result, the tree will be fully refreshed, hence we avoid a situation where we already keep a record containing two links, and now we would need to replace one of them by such a record again. Observe that we never try to refresh and split a tree at the same time, hence there is no interference between those two operations.

Splitting requires first choosing a good splitter $e_{i}$, which can be done in a single left-to-right sweep through the contents of the static predecessor search structure, and then building new static predecessor search structures containing $\{e_{1},e_{2},\ldots,e_{i-1}\}$ and $\{e_{i+1},\ldots,e_{|S|}\}$. This must be done in the background, hence it might happen that some updates concerning the already seen part occur. Hence we can only guarantee that the total weight of $P_{i}$ and $S_{i}$ is at most $f(\ell+1)+\frac{3}{2}f(\ell)$, as there might be up to $\frac{1}{2}f(\ell)$ of such updates. An additional complication is that the weight of $P_{i}$ or $S_{i}$ might be so that we would expect the corresponding new tree to be undergoing refreshing at the moment, which is the reason we have chosen to refresh when the weight is a multiple of $\frac{1}{2}f(\ell)$ instead of simply taking a multiple of $f(\ell)$. We never skip two consecutive moments when we should start refreshing, as the next split starts not sooner than after $f(\ell+1)-\frac{3}{2}f(\ell)\geq f(\ell)$ updates, hence it still never happens that we try to split a twin.

Rebuilding the static predecessor search structures in the background requires storing two versions, old and new, at the same time. More precisely, we have the old version, which we use for navigation and answering any query, and the new one that is being built. The corresponding elements in both versions are linked to each other, and any update is performed in both of them. Then, when the new version is ready, we simply discard the old one in constant time. Discarding can be done by storing timestamps that can be used to determine which links are still valid, and which should be actually null at the moment. The definition of a wexponential tree must be slightly relaxed by saying that its current total weight never exceeds $2f(\ell+1)+\frac{1}{2}f(\ell)$, so the bounds on the complexity of search, and both insert and increase (excluding the cost of splitting) still hold, and the theorem follows.

\subsection{Amortized Version}
\label{sec:amortized search}

In this section we develop an amortized version of our dynamic compacted trie search structure.

For each node $v$ we keep two separate structures. The first is a static dictionary implemented using Prop.~\ref{thm:hashing} that stores edges leading to all children $v_{i}$ of size ``similar'' to the size of $v$. The second is a dynamic predecessor data structure implemented using Theorem~\ref{thm:wexponential}. To make the notion of ``similar size'' more precise, define the \emph{weight} of a node to be the number of leaves in its subtree, and its \emph{level} to be $\ell$ when the weight belongs to $[f(\ell),2f(\ell+1))$ ($[f(\ell),3f(\ell+1))$ in the worst-case version presented in the next section),
where $f(\ell)=\left\lfloor 2^{(\frac{3}{2})^{\ell}} \right\rfloor$, again. Clearly, the weights and hence the levels along any root-to-leaf path are nonincreasing. We define the \emph{fragment} of node $v$ to be the maximal subtree containing $v$ and consisting of nodes of the same level. The root $r$ of a fragment containing $v$ is its lowest ancestor of level $\ell$, but with parent of level at least $\ell+1$. For each such fragment we store the root $r$ and a list of bidirectional links to all its nodes, and a counter with the number of leaves in the subtree rooted at $r$. For each node $v$ we store all edges leading to children of $v$ of the same level $\ell$ in a static dictionary. All edges leading to children of smaller levels are kept in a wexponential search tree. We would like the weights in this structure to be the same as the weights of the corresponding nodes, but for technical reasons we maintain a weaker condition, namely a node of weight $w$ has weight from $[\sqrt{w},w]$ in the wexponential search tree stored at its parent. First we show that this relaxation is enough to guarantee good bounds on the search time.

\begin{lemma}
\label{thm:search}
Traversing any path of length $m$ starting at the root takes just $O(m+\frac{\lg^{2}\lg\sigma}{\lg\lg\lg\sigma})$ time.
\end{lemma}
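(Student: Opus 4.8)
The plan is to charge the cost of traversing the path to two separate sources: the edges inside a single fragment (where we use the static dictionaries and pay $O(1)$ per edge), and the edges that cross from one fragment to the next (where we descend into a child of strictly smaller level, and pay via a wexponential search tree). First I would fix the root-to-node path $v_0=r, v_1,\dots, v_m$ and split it into maximal runs of nodes of equal level. Within such a run, every step $v_{j}\to v_{j+1}$ goes to a child of the same level $\ell$, which by construction is stored in the static dictionary of $v_j$; a lookup there costs $O(1)$ by Prop.~\ref{thm:hashing}. Summed over the whole path this contributes $O(m)$. So the only thing left is to bound the total cost of the at most (number of distinct levels along the path) crossing steps, each of which is a predecessor search in a wexponential search tree.

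The key step is the telescoping argument. Suppose the path crosses through fragments of levels $\ell_1 > \ell_2 > \cdots > \ell_k$, and let $v^{(t)}$ be the child we descend to when leaving the fragment of level $\ell_t$; it has weight $w_t$ (number of leaves below it), and by the relaxation hypothesis it is stored in the wexponential tree at its parent with weight at least $\sqrt{w_t}$. The wexponential tree at that parent has total weight $W_t$ equal (up to the same relaxation) to the weight of the parent, which lies in $[f(\ell_{t}),\,O(f(\ell_{t}+1)))$; crucially $W_t \le \text{poly}(w_{t-1})$ since the parent is a descendant of $v^{(t-1)}$ — in fact the parent's weight is at most $w_{t-1}$, so $\lg W_t = O(\lg w_{t-1})$. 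By Theorem~\ref{thm:wexponential}, the search at step $t$ costs
\[
O\!\left(\lg\frac{\lg W_t}{\lg w'_t}\cdot\frac{\lg\lg\sigma}{\lg\lg\lg\sigma}\right)
= O\!\left(\lg\frac{\lg w_{t-1}}{\tfrac12\lg w_t}\cdot\frac{\lg\lg\sigma}{\lg\lg\lg\sigma}\right)
= O\!\left(\left(1+\lg\frac{\lg w_{t-1}}{\lg w_t}\right)\cdot\frac{\lg\lg\sigma}{\lg\lg\lg\sigma}\right),
\]
where $w'_t\ge\sqrt{w_t}$ is the weight used inside the tree. Summing over $t=1,\dots,k$, the $\lg\lg\lg$-type terms telescope: $\sum_t \lg\frac{\lg w_{t-1}}{\lg w_t} = \lg\frac{\lg w_0}{\lg w_k} = O(\lg\lg\sigma)$ since $w_0\le\sigma$ (each small tree has $O(\sigma)$ leaves by Lemma~\ref{thm:reduction}) and $w_k\ge 1$. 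The additive "$+1$" per level contributes $k\cdot\frac{\lg\lg\sigma}{\lg\lg\lg\sigma}$, and since the levels $\ell_1>\dots>\ell_k$ are distinct and bounded by $O(\lg\lg\sigma)$ (as weights are $O(\sigma)$), we have $k=O(\lg\lg\sigma)$, giving $O\!\left(\frac{\lg^2\lg\sigma}{\lg\lg\lg\sigma}\right)$. Adding the $O(m)$ from the in-fragment steps yields the claimed bound.

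The main obstacle I expect is making the bookkeeping around the $\sqrt{\cdot}$-relaxation and the "weight vs.\ level" ranges airtight: one has to be careful that $\lg W_t$ and $\lg w'_t$ are within constant factors of $\lg(\text{parent weight})$ and $\lg(\text{child weight})$ respectively, so that the $\lg\frac{\lg W_t}{\lg w'_t}$ terms genuinely telescope rather than accumulating an extra $\Theta(1)$ that would only re-prove the weaker per-level bound — though as shown above even that extra $\Theta(1)$ per level is affordable because the number of levels is $O(\lg\lg\sigma)$. A secondary subtlety is that, because of the background rebuilding in the worst-case version, the weight stored in a wexponential tree may momentarily be a relaxed value (up to $2f(\ell+1)+\tfrac12 f(\ell)$), but this only perturbs $\lg W_t$ and $\lg w'_t$ by constant factors and so does not affect the asymptotics.
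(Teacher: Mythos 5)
Your proposal is correct and follows essentially the same route as the paper's proof: $O(1)$ static-dictionary lookups for same-level steps summing to $O(m)$, plus wexponential-tree queries only when the level strictly drops, bounded by the telescoping sum via $w'_t\ge\sqrt{w_t}$, tree weight at most the parent's weight, and parent weight at most $w_{t-1}$, with the extra $+1$ per level absorbed because there are only $O(\lg\lg\sigma)$ levels. This matches the paper's chain of inequalities \eqref{eq:wprime}--\eqref{eq:W} and its final estimate, so no further comparison is needed.
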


\begin{proof}
At each node we first use the static dictionary to check if the next edge we would like to traverse is stored there. If so, we continue. Otherwise we query the wexponential search tree. In order to bound the complexity of the whole procedure, we only have to bound the total time of the latter steps, as the former sum to at most $O(m)$. Observe that whenever we query the wexponential search tree, we either terminate, or decrease the current level, which is already enough to get a bound of $O(\frac{\lg^{3}\lg\sigma}{\lg\lg\lg\sigma})$. To get the claimed complexity, let $W_{1},W_{2},\ldots,W_{k}$ be the weights of nodes where we query a wexponential search tree. Similarly, let $w_{1},w_{2},\ldots,w_{k}$ be the corresponding weights of nodes that we find there (note that $W_{i}$ is not necessarily stored in our implementation, but $w_{i}$ certainly is). Let $w'_{i}$ be the weight of the elements of the wexponential search tree corresponding to $w_{i}$, and $W'_{i}$ the total weight of this structure. Then we have the following inequalities:
\begin{eqnarray}
w'_{i}  &\in& [\sqrt{w_{i}},w_{i}] \label{eq:wprime}\\
W'_{i} &\leq& W_{i} \label{eq:Wprime}\\
W_{i+1} &\leq& w_{i}\label{eq:W}
\end{eqnarray}
and the time of this part is order of
$$
\frac{\lg\lg\sigma}{\lg\lg\lg\sigma}\sum_{i} \lg\frac{\lg W'_{i}}{\lg w'_{i}}
\stackrel{\eqref{eq:wprime},\eqref{eq:Wprime}}{\le} \frac{\lg\lg\sigma}{\lg\lg\lg\sigma}(\lg\lg\sigma+\sum_{i}\lg\frac{\lg{W_{i}}}{\lg w_{i}})
\stackrel{\eqref{eq:W}}{=} \frac{\lg\lg\sigma}{\lg\lg\lg\sigma}(\lg\lg\sigma+\lg\frac{\lg{W_1}}{\lg w_k})
\le \frac{\lg^{2}\lg\sigma}{\lg\lg\lg\sigma}\ .
$$
\end{proof}

Whenever we add a new leaf, the weights of all nodes on its path to the root increase by one. We iterate over all fragments above the new leaf and increase their counters. Iterating is done in $O(\lg\lg\sigma)$ time by starting at the leaf and repeatedly jumping to the root of the current fragment by following the bidirectional link stored at each node. To maintain the invariant that the weights on any path are nonincreasing, we actually first construct a list of all fragments, and then update their counters one-by-one in a top-down order. For each root that we consider we need to update its corresponding weight in the wexponential search tree at its parent, which takes at most $O(\lg\frac{\lg W}{\lg \sqrt{w}})=O(1+\lg\frac{\lg W}{\lg w})$ time, where $w$ is the weight of this root, and $W$ is at most the weight of its parent. Summing up over all roots, as in the proof of Lemma~\ref{thm:search}, we get a telescoping expression which is at most $O(\lg\lg\sigma)$.

During this procedure it might happen that we increase the weight of some root $r$ to $2f(\ell+1)$, and hence need to increase its level. Maintaining the invariant in such situation is a very costly procedure, and we need to somehow amortize this cost. We start at $v:=r$ and descend down to its (unique) child of weight exceeding $f(\ell+1)$ as long as possible.

Note that we don't actually store the weight of each node, but given a weight of $v$ with exactly one child of level $\ell$, we can compute the weight of this child
by iterating through all other children, which are roots of their fragments, and hence have up-to-date counters available. Note that if there is more than one child of level $\ell$, there cannot be any child of weight exceeding $f(\ell+1)$. We call the traversed path the \emph{tail}, and increase the level of all its nodes, see Fig.~\ref{fig:tail}. Then maintaining the invariants requires four steps.

\begin{figure}
\centering
\includegraphics[scale=1]{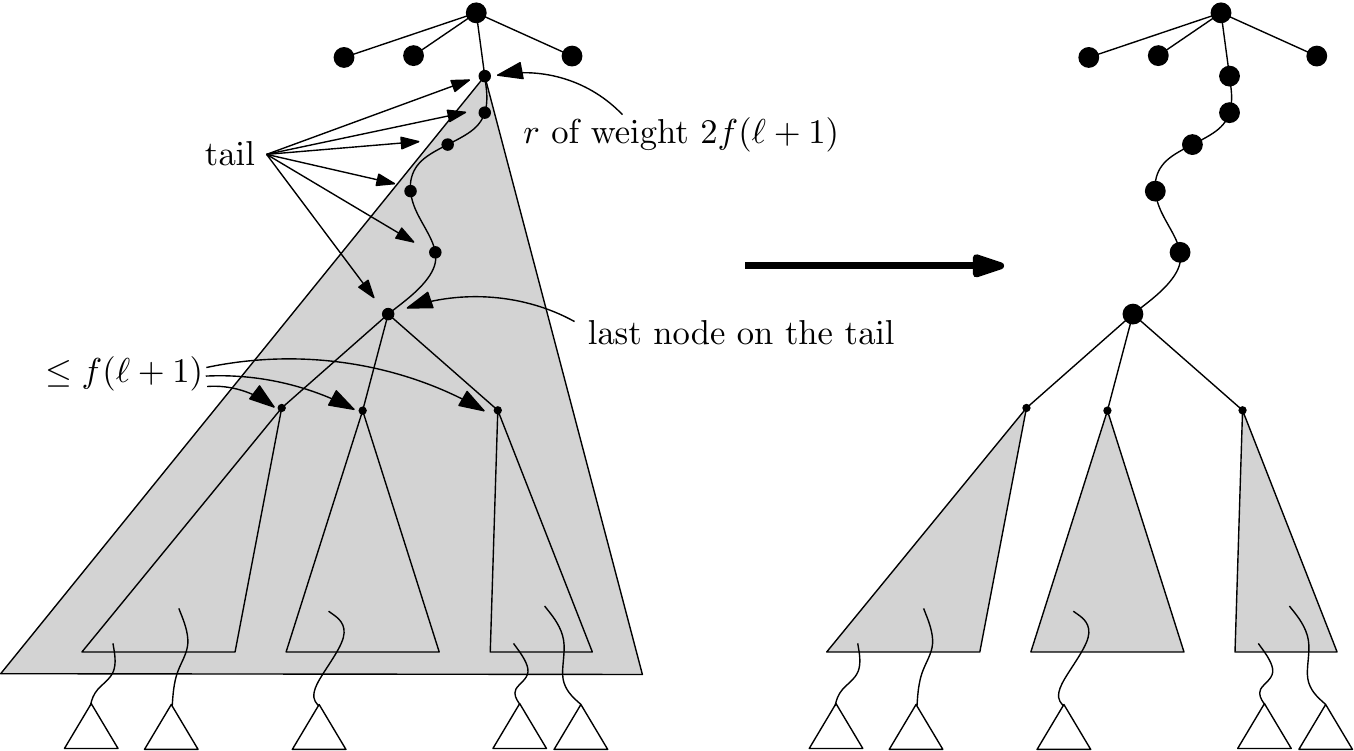}
\caption{Bumping the levels of tail nodes.}
\label{fig:tail}
\end{figure}

\begin{enumerate}
\item If the parent of $r$ is of level $\ell+1$, we must rebuild its static dictionary in order to include a new element there.
\item We move all nodes on the tail from the list of the current fragment to either the list of the fragment corresponding to the parent of $r$, if its level is $\ell+1$, or a new fragment.
\item If the last node on the tail has more than one child of level $\ell$, bumping its level to $\ell+1$ splits the current fragment into more than one. We need to iterate through all nodes there, and partition them accordingly creating new fragments. Note that creating a new fragment requires computing the weights of their roots, which can be done by iterating through children of smaller levels for all nodes in the current fragment.
\item For each child of level $\ell$ of the last node on the tail we must add a new element to the wexponential search tree. Note that at this point all those nodes are roots of their fragments, hence have their weights computed.
\end{enumerate}

Notice that all those steps are local in the sense that they modify only the nodes in the current fragment. To bound the time taken by the whole procedure,
we allocate credits to fragments, making sure that a fragment of weight $w$ and level $\ell$ has $\max(0,w-f(\ell+1))$ credits available.
Whenever we split an edge and create a new leaf, we allocate one credit for each of the at most $O(\lg\lg\sigma)$ fragments above. Then when we are increasing the level of $r$, its weight is $2f(\ell+1)$, so we have $f(\ell+1)$ credits available, and because of the way we defined the tail, we can spend all of them, as all new fragments of level $\ell$ will be of weight at most $f(\ell+1)$ after the update. We can bound the time required for maintaining the invariants, which we call \emph{promoting} at $r$, as follows.

\begin{enumerate}
\item Rebuilding the static dictionary takes $O(\frac{f(\ell+2)}{f(\ell+1)}\log^2\log\frac{f(\ell+2}{f(\ell+1)})=O(\frac{f^2(\ell+2)}{f^2(\ell+1)})$. We call this refreshing the parent of $r$.
\item There are at most $4f(\ell+1)-1$ nodes in the subtree of $r$, hence traversing the tail, including the time taken to compute the weight of all nodes there, takes $O(f(\ell+1))$.
\item There are at most $2f(\ell+1)$ nodes in the current fragment, hence the nodes in the current fragment can be partitioned into new fragments in $O(f(\ell+1))$ time. This includes the time necessary to compute the weights of their roots, as in the worst-case we iterate through at most $4f(\ell+1)-1$ nodes in the subtree of $r$.
\item We must insert at most $\frac{2f(\ell+1)}{f(\ell)}$ elements into the wexponential tree stored at the last node of the tail, and inserting an element of weight $w$ is done by first adding a new element of weight one, and then increasing its weight repeatedly $\sqrt{w}$ times.
\end{enumerate}

Hence the total number of credits required by the promoting is:

$$ O\left(\frac{f^{2}(\ell+2)}{f^{2}(\ell+1)} +  f(\ell+1) + \frac{f(\ell+1)}{f(\ell)} \sqrt{f(\ell)}\lg\lg f(\ell+1)\right) = O\left(\frac{f^{2}(\ell+2)}{f^{2}(\ell+1)} + f(\ell+1)\right). $$

Hence we only need to ensure that $\frac{f^{2}(\ell+2)}{f^{2}(\ell+1)} \leq f(\ell+1)$, which is equivalent to $f^{2}(\ell+2) \leq f^{3}(\ell+1)$, and then $2(\frac{3}{2})^{\ell+2} \leq 3(\frac{3}{2})^{\ell+1}$, hence by the choice of $f$ we always have enough credits to amortize the update.

\subsection{Worst-Case Version}

In this section we develop the final worst-case efficient version of our dynamic compacted trie search structure.

The only non worst-case efficient part of the previous implementation is increasing the level of a root $r$. Instead of traversing the tail and updating all the structures as soon as its level reaches $2f(\ell+1)$, we will execute those operations incrementally over the next $f(\ell+1)$ insertions in the subtree rooted at $r$, and relax the condition on the weight of a node of level $\ell$ by saying that it shouldn't exceed $3f(\ell+1)$. As selecting the tail is done incrementally, we redefine it to be the maximal sequence of nodes of weight at least $s$ at the moment we started the process, which requires storing at each edge a timestamp denoting its creation time.

First of all, we must make sure that there is at most one promoting process per fragment. Even though we have already chosen the tail, future insertions might increase the weight of some additional nodes to more than $f(\ell+1)$. Nevertheless, as we execute the procedure over just $f(\ell+1)$ insertions, no node in the current fragment (or in one of the new fragments) which doesn't belong to the tail can reach the weight of $2f(\ell+1)$ before we are done. Furthermore, there is some interaction between different fragments. While we are still promoting at $r$, new nodes might be added to the corresponding fragment. Also, different children of a node might need to be refreshing it at overlapping periods of time.

We choose the speed of the simulation so that there is enough time to process a fragment consisting of $3f(\ell+1)$ nodes over $f(\ell+1)$ insertions. When splitting the current fragment into more than one, we run a depth-first search to determine the new fragments. As soon as we reach a node, we set the link to its (new) fragment. Then when a new node appears, its parent is either already processed and hence has the correct link set, or will be seen later, and we will notice the new child then. The same reasoning works for inserting the new elements into the wexponential tree stored at the last node of the tail. Refreshing a node is  a quite different issue, though, as we must somehow deal with the problem that many children might need to refresh the same parent. Each node of level $\ell$ stores a list of all its children of weight at least $f(\ell)$, where we simply append a child as soon as its weight becomes large enough. As a part of our simulation we run the refreshing process. In other words, each child of a node can be potentially running a deferred process refreshing its parent. The first step of such process is making a read-only copy of the current list. As this list only grows, instead of making a physical copy we can simply store the current last element, which can be done in constant time. Then we build a new static dictionary containing all elements on this read-only copy over the insertions in the subtree rooted at the child.  As soon as the construction finishes, we substitute the old dictionary in constant time by replacing one pointer. There is just additional detail: we should first check if we are really replacing an older version, i.e., we can simply look at the number of elements stored there. This is because it might have happened that there is a refreshing process which starts later, yet finishes earlier (as there are more insertions to the corresponding subtree). This ensures that when the weight of $r$ reaches $3f(\ell+1)$, the static dictionary stored a its parent surely contains its edge (and, potentially, also some more recent edges).

\section{Conclusions and Open Problems}
\label{sect:conclusions}
We gave data structures for improved deterministic worst-case searching and update times in suffix trees and compacted tries in general. Whereas previous data structures either ignored the alphabet size completely \cite{andersson07dynamic} or had at least a logarithmic dependency on the alphabet size, our new structures have only a doubly-logarithmic dependency on the alphabet size. The obvious open questions are if better running times are possible, or if non-trivial lower bounds exist?

\section*{Appendix}
\appendix
\section{Better suffix tree oracle}

In this section we show how to implement a suffix tree oracle with $f(n,\sigma)=O(\lg\lg n + \frac{\lg^2\lg \sigma}{\lg\lg\lg \sigma})$. At a high level, we follow the approach of Breslauer and Italiano~\cite{Breslauer11fringe}, which in turn is based on Weiner's algorithm~\cite{weiner73linear}. The main difficulty is that both methods are only efficient when the alphabet is of constant size, which is not our case, and some additional ideas are required. Nevertheless, we are able to achieve $f(n,\sigma)=O(\lg\lg n+\lg\sigma)$ by rather simple means. Then we decrease this bound even further by applying the structure of Kopelowitz~\cite{Kopelot12indexing}, which needs some minor tweaking as we are interested in a deterministic solution.

For each node $u$ of the suffix tree we define its $a$-link $W_a(u)$ if the suffix locus $v=au$ exists. If $v=au$ is a node, then the link is called a hard $a$-link and points to this node, and otherwise a soft $a$-link and points to the lower end of the corresponding edge. Hard links correspond to inverse suffix links. Assuming that we maintain the suffix tree using Weiner's algorithm, the following properties hold.

\begin{observation}
\begin{enumerate}[(a)]
\item\label{obs:monotone} If $W_a(u)$ is defined and $v$ is an ancestor of $u$, then $W_a(v)$ is defined, too. In other words, the sets of links are monotone when we traverse the tree starting from the root.
\item\label{obs:change} As soon as $W_a(u)$ becomes hard, it stays hard and cannot change. On the other hand, a soft link may change or become hard at some point.
\item\label{obs:reconstruct} $W_a(u)$ is defined if and only if there is a descendant $v$ of $u$ with the hard $a$-link defined, and in fact a stronger statement holds: $W_a(u)$ is defined iff $W_a(u)=W_a(v)$, where $W_a(v)$ is a hard $a$-link, and $v$ is the (unique) highest descendant of $u$ with the hard $a$-link defined. (Uniqueness follows because if two nodes have a hard $a$-link defined, then their lowest common ancestor also has it.)
\end{enumerate}
\end{observation}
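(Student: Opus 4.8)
Write $\mathrm{str}(u)$ for the string spelled by a node $u$, and recall that a node of the suffix tree spells either a full sentinel-terminated suffix (a leaf) or a string followed by at least two distinct characters in the text (a branching internal node); also that $W_a(u)$ is defined exactly when $a\cdot\mathrm{str}(u)$ occurs in the current text, is \emph{hard} exactly when $a\cdot\mathrm{str}(u)=\mathrm{str}(w)$ for some node $w$ (and then $W_a(u)=w$), and is \emph{soft} when $a\cdot\mathrm{str}(u)$ occurs but is an interior point of an edge (and then $W_a(u)$ is the deeper endpoint of that edge). Part (a) is then immediate: if $v$ is an ancestor of $u$ then $\mathrm{str}(v)$ is a prefix of $\mathrm{str}(u)$, so $a\cdot\mathrm{str}(v)$ is a prefix of $a\cdot\mathrm{str}(u)$, and every prefix of an occurring string occurs.

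For part (b) the plan is to invoke the standard structure of Weiner's algorithm: prepending one letter adds exactly one new leaf and splits at most one edge, creating at most one new internal node, and never deletes, merges, or relabels a node. Hence every node keeps spelling the same string forever and the family of strings spelled by a node only grows. In particular, once $a\cdot\mathrm{str}(u)$ is spelled by a node, that node is fixed, so a hard link can never change. A soft $W_a(u)$ places the locus of $a\cdot\mathrm{str}(u)$ in the interior of an edge; a subsequent insertion may split that edge exactly at this point (the link turns hard), or at a strictly deeper interior point (the lower endpoint, hence the target of the soft link, changes), and both genuinely occur.

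Part (c) is the substantive case. The ``$\Leftarrow$'' direction of the equivalence follows at once from part (a), since $u$ is an ancestor of any descendant $v$. For the remaining direction and its refinement, I would first isolate the combinatorial lemma \emph{if $ax$ is spelled by a node, then so is $x$}: if $ax$ is a leaf (a suffix) then $x$ is a shorter suffix, hence a leaf; and if $axb$ and $axc$ occur with $b\ne c$, then so do $xb$ and $xc$, so $x$ is branching. Now assume $W_a(u)$ is defined and let $q$ be the node-locus of $a\cdot\mathrm{str}(u)$ (that very node if $a\cdot\mathrm{str}(u)$ is a node, otherwise the deeper endpoint of the edge whose interior contains it), so that $\mathrm{str}(q)=a\cdot\mathrm{str}(u)\,\gamma$ for some possibly empty $\gamma$ and $W_a(u)=q$. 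By the lemma, $\mathrm{str}(u)\,\gamma$ is spelled by a node $v$, which is a descendant of $u$ because $\mathrm{str}(u)$ is a prefix of $\mathrm{str}(v)$; moreover $W_a(v)$ is the locus of $a\cdot\mathrm{str}(v)=\mathrm{str}(q)$, namely the node $q$ itself, so $W_a(v)$ is hard and $W_a(u)=q=W_a(v)$.

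Finally, to obtain ``\emph{the} (unique) highest'' such descendant, let $D$ be the set of descendants of $u$ carrying a hard $a$-link, and show $D$ is closed under lowest common ancestors: for $v_1,v_2\in D$ with $v_0=\mathrm{lca}(v_1,v_2)$ --- still a descendant of $u$, since $u$ lies above both --- the edges leaving $v_0$ toward $v_1$ and $v_2$ start with distinct letters $b\ne c$, so $\mathrm{str}(v_0)b$ and $\mathrm{str}(v_0)c$ occur, hence $a\cdot\mathrm{str}(v_0)b$ and $a\cdot\mathrm{str}(v_0)c$ occur (being prefixes of $a\cdot\mathrm{str}(v_1)$ and $a\cdot\mathrm{str}(v_2)$, which occur as $v_1,v_2\in D$), so $a\cdot\mathrm{str}(v_0)$ is branching, i.e.\ a node, and $v_0\in D$. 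Thus $D$ has a unique topmost element $v$, and the same branching argument shows that no node lies strictly between the point $a\cdot\mathrm{str}(u)$ and the node $a\cdot\mathrm{str}(v)$ --- such a node would, via the lemma, produce an element of $D$ strictly above $v$ --- so the node-locus $q$ of $a\cdot\mathrm{str}(u)$ equals $a\cdot\mathrm{str}(v)=W_a(v)$, whence $W_a(u)=q=W_a(v)$. The main obstacle is precisely this last argument: one has to use the lemma in its correct orientation (the converse fails) and combine it carefully with the node-locus bookkeeping to certify that it is the \emph{highest} hard-$a$-link descendant whose link coincides with $W_a(u)$.
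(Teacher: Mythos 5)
Your proposal is correct, and it follows essentially the same line as the paper, which states these facts as an unproved observation about Weiner links and only sketches the uniqueness claim via the same LCA-closure argument you give (two hard $a$-link nodes force a hard $a$-link at their lowest common ancestor). The extra details you supply --- prefix-closure of occurrence for (a), stability of nodes under Weiner's prepend-updates for (b), and the lemma that $ax$ being a node forces $x$ to be a node (valid under the sentinel convention the paper adopts) together with the node-locus bookkeeping for (c) --- are all sound and simply make explicit what the paper treats as standard.
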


Breslauer and Italiano show that to implement a suffix tree oracle, one needs to store the suffix tree such that the following four operations can be performed:

\begin{enumerate}
\item given a letter $a$ and a node $v$, locate the lowest ancestor $u$ of $v$ with the $a$-link defined, and retrieve its $W_a(u)$,
\item make a soft link hard,
\item create or change a soft $a$-link,
\item split an edge $(u,v)$ of the suffix tree, and for each existing $a$-link $W_a(v)$ (both soft and hard) create a new soft $a$-link $W_a(v'):=W_a(v)$, where $v'$ is the new middle node and $v$ is the lower end of the split edge.
\end{enumerate}

After prepending a single letter to the text a constant number of operations of each type is executed.
The most problematic is the last one, as
there might be many defined $a$-links from $v$. 
To overcome this, we will maintain the soft links just at some carefully chosen nodes. For the remaining nodes we will exploit (\ref{obs:reconstruct}) to recover the missing soft links whenever we need to access them. Furthermore, during the execution of the algorithm it might happen that we haven't yet updated some soft links. Nevertheless, we always update them in a top-bottom order, so (\ref{obs:monotone}) still holds, and all hard links are eagerly created as soon as they should appear. The details are as follows.

\subsection{Tree Decomposition}
We (again) divide the nodes into heavy and light, but this time we define the \emph{weight} of a node to be the number of leaves in its subtree plus the number of hard links defined there (which due to (\ref{obs:change}) can only increase). We choose parameter $s:=\sigma$ and say that all nodes of weight at least $3s$ are heavy, all nodes of weight at most $s$ leaves are light, and the remaining nodes can be either light or heavy.

Heavy nodes store \emph{all} Wlinks in a dynamic dictionary (implemented using Prop.~\ref{thm:predecessor3}), whereas light nodes store only the hard ones. 
Additionally, we maintain a structure which allows us to locate for a given heavy node $v$ its lowest ancestor $u$ with the $a$-link defined (point (1) above). The details on how to update this structure will be given later; we first describe how it can be used. We will consider the two possible cases where the answer is a heavy node or when it is light.

\subsection{The Answer is a Heavy Node}
First we need a result on a special case of the marked ancestor problem.

\begin{lemma}[Fringe marked ancestor (FMA) structure~\cite{Breslauer11fringe}]
  \label{lem:fringe}
  We can maintain a tree $T$ on $n$ nodes which are either marked or unmarked under the following operations:
  \begin{enumerate}
  \item inserting a new node in the middle of an edge, which adopts the marked status of its parent,
  \item inserting a new leaf, which is initially unmarked,
  \item marking the root or a node whose parent is already marked,
  \item locating the lowest marked ancestor of a given node.
  \end{enumerate}
\end{lemma}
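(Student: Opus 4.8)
The plan is to realize the Fringe Marked Ancestor (FMA) structure by exploiting the very restrictive ``fringe'' nature of the operations: marks only ever propagate \emph{downward} from an already-marked node, a node created inside an edge inherits its parent's status, and a new leaf is always unmarked. Consequently, at every moment the marked nodes form a \emph{prefix-closed} (connected) subtree hanging from the root, exactly the situation of a growing ``upper part'' of the tree — and the unmarked nodes form a forest of ``fringe'' subtrees dangling off the boundary of the marked region. The lowest marked ancestor of a node $v$ is therefore: $v$ itself if $v$ is marked, and otherwise the root of the fringe subtree containing $v$ (more precisely, the parent of that root, which is the boundary-marked node). So the whole problem reduces to: (i) maintaining, for each unmarked node, a pointer to the boundary-marked node its fringe subtree hangs from, and (ii) supporting the mark operation, which turns one unmarked fringe root into a marked node and thereby may expose its children as new fringe roots.

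First I would maintain each maximal unmarked fringe subtree as an explicit object with a designated \emph{representative} equal to the boundary-marked node it attaches to (or, equivalently, store at that boundary node a handle to the fringe block, and at each unmarked node a pointer to its block). Inserting a new leaf is trivial: it joins the fringe block of its (necessarily unmarked-or-boundary) parent, so we copy the parent's block pointer in $O(1)$. Inserting a node in the middle of an edge $(u,v)$: the new node takes $u$'s marked status; if both endpoints lie in the same region the new node simply joins that region's block; the only subtle case is when $u$ is marked and $v$ is unmarked, but then $v$ was itself a fringe root hanging off $u$, and the new middle node becomes the new fringe root hanging off $u$ — again an $O(1)$ pointer update. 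Marking a fringe root $r$ (whose parent is already marked) is the expensive-looking step: $r$ leaves its old block, each child $c_1,\dots,c_k$ of $r$ becomes the root of its own new fringe block, and we must set each $c_i$'s subtree to point to the block rooted at $c_i$ (with boundary node $r$).

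The main obstacle, exactly as in Lemma~\ref{thm:reduction} and the worst-case constructions elsewhere in the paper, is that naively re-tagging an entire subtree when its root is marked costs $\Theta(\text{subtree size})$, which is not worst-case efficient per operation. I would handle this with the same deamortization template used throughout: observe that since marks propagate strictly downward, each node changes its block membership only $O(1)$ times \emph{per relevant level} — more directly, use a \emph{lazy} scheme in which a node does not store an absolute block pointer but a pointer to its parent's block together with a small override, and resolve the lowest marked ancestor by a short walk up combined with path-compression/union-find-style shortcut pointers; bound the amortized cost by a credit argument (each inserted node pays $O(1)$ credits that finance its eventual promotion out of a fringe block), and then deamortize by spreading the retagging of a just-marked root's subtree incrementally over the next batch of insertions into that subtree, identically to the three-phase incremental traversals used for the wexponential trees and for Lemma~\ref{thm:reduction}. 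Since the FMA structure is used as a black box inside a larger amortized-to-worst-case machinery, it suffices to first give the clean amortized bound and then invoke the standard incremental simulation; the payoff is $O(1)$ amortized (and, after deamortization, worst-case) time per FMA operation, with $O(n)$ space, which is what the suffix tree oracle needs.

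Finally I would double-check the interaction with the ``soft links updated top-down'' caveat and with property~\eqref{obs:reconstruct}: because the marked region always stays prefix-closed and marks are created eagerly, querying the lowest marked ancestor never races with a pending update, so correctness of the FMA queries is immediate from the invariant that unmarked fringe blocks partition the unmarked nodes and each block records its boundary-marked attachment point. I expect the retagging-cost deamortization to be the only genuinely delicate part; everything else is $O(1)$ bookkeeping on pointers.
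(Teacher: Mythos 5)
You are trying to reprove a result that the paper does not prove at all: Lemma~\ref{lem:fringe} is imported verbatim from Breslauer and Italiano, with the stated cost of $O(\lg\lg n)$ per operation mentioned immediately after the lemma; the paper's ``proof'' is the citation. So any self-contained construction would be a different route, but yours has a genuine gap in the complexity analysis. Your reduction of the query to ``find the boundary-marked node of the fringe block containing $v$'' is fine, and the handling of leaf insertions and edge subdivisions is indeed $O(1)$ bookkeeping. The problem is operation (3). Your credit argument rests on the claim that each node is ``promoted out of a fringe block'' $O(1)$ times, but a node $v$ changes block membership every time its lowest marked ancestor changes, i.e.\ up to $\mathrm{depth}(v)$ times as the marked boundary sweeps down past each of its ancestors. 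On a path of length $n$ marked top-down one node at a time, explicit retagging costs $\Theta(n)$ per mark and $\Theta(n^2)$ overall, so $O(1)$ credits per insertion cannot pay for it. The fallback you sketch (store a pointer toward the parent's block and rely on path-compression-style shortcuts) turns the query into a find in a \emph{split}-type union--split--find process, whose amortized cost is not $O(1)$ by any argument you give --- this is essentially the problem Breslauer and Italiano reduce to, and it is why their bound is $O(\lg\lg n)$, not constant.

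The deamortization step also does not transfer from the paper's templates. In Lemma~\ref{thm:reduction} and in the wexponential trees, a rebuild of cost $O(s)$ is triggered only after $\Omega(s)$ updates have accumulated \emph{inside the very subtree being rebuilt}, which is what allows the work to be spread over subsequent insertions there. Here, marking a fringe root $r$ is not coupled to any stream of insertions into $r$'s subtree: the subtree may receive no further updates at all, yet queries inside it must immediately return $r$ (or later descendants) as the lowest marked ancestor, so there is nothing to schedule the retagging against, and a query can arrive while arbitrarily much retag work is pending. Finally, note that your claimed $O(1)$ worst-case bound is strictly stronger than what the lemma needs --- the paper only uses $O(\lg\lg n)$ per operation, which is absorbed into $f(n,\sigma)=O(\lg\lg n+\frac{\lg^2\lg\sigma}{\lg\lg\lg\sigma})$ --- so the honest fix is either to cite Breslauer--Italiano as the paper does, or to give a construction (e.g.\ via a van Emde Boas--type union--split--find over an Euler tour / preorder numbering) that achieves $O(\lg\lg n)$ and prove that bound, rather than asserting constant time.
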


Due to the way marking is done the marked nodes form a connected subtree containing the root. Hence, the answer to queries on unmarked nodes will always lie on the \emph{fringe} of this subtree. All operations cost $O(\lg \lg n)$ time~\cite{Breslauer11fringe},
which is faster than in the general setting where a lower bound of $\Omega(\frac{\lg n}{\lg\lg n})$ for at least one of the operations exists~\cite{alstrup98marked}

We store a single FMA structure over the suffix tree, where all \emph{heavy} nodes are marked. This allows us to locate for a given node $x$ its lowest heavy ancestor $v$ in $O(\lg\lg n)$ time. Because the heavy subtree can be quite large, we now need to do the following. Define the \emph{induced heavy subtree} as the tree that contains only the heavy leaves and branching heavy nodes. We store this tree explicitly, and couple all edges of the \emph{original} suffix tree with their corresponding edges in the \emph{induced} tree by bidirectional links that allow us to move back and forth between the two trees.

In this induced tree, for each letter $a$ we keep a separate FMA structure, where a node is marked when its $a$-link is defined.
Due to (\ref{obs:monotone}) the marked nodes form indeed a connected subtree containing the root, so Lemma \ref{lem:fringe} can be applied.
Additionally, for each edge of the induced tree we keep an array $\mathrm{lowest}[a]$ storing a pointer to the lowest heavy node (of the \emph{original suffix tree}) that is coupled with the edge and that has the $a$-link defined. Due to (\ref{obs:monotone}) this array gives us the only candidate on the edge of the induced tree.

All FMA structures, arrays, and pointers to edges are enough to locate for a given heavy node $v$ its lowest ancestor $u$ with the $a$-link defined. First we check if $u$ and $v$ belong to the same edge of the induced tree. If not, we replace $v$ by its lowest branching ancestor $v'$ by going to the top node on the edge of the induced tree. Then we locate its lowest branching ancestor $u'$ with the $a$-link defined using the FMA structure. Then if we look at the edge outgoing from $u'$ and leading to the subtree $v$ belongs to, $u$ surely is there.
The edge can be located by retrieving the corresponding letter, and return $\mathrm{lowest}[a]$. The total time taken by those operations is $O(\lg\lg n + \frac{\lg^2\lg\sigma}{\lg\lg\lg\sigma})$, where the second addend is due to the edge retrieval. Furthermore, the total size of the structure is linear, as we need $O(\sigma)$ space per edge, and we keep $\sigma$ FMA structures, each on $O(\frac{n}{s})$ nodes.

\subsection{The Answer is a Light Node}
For each maximal subtree consisting of light nodes we store a separate structure allowing us to quickly locate for a given node $v$ in this subtree its lowest light ancestor $u$ with the $a$-link defined, and retrieve its $W_a(u)$. By (\ref{obs:reconstruct}), we only have to maintain all hard links as long as given $v$ we can quickly locate its highest descendant $u$ with the hard $a$-link defined. Assume that we can maintain a preorder traversal of all nodes in the subtree. Then there is just one candidate for $u$:
the first node after $v$ in the traversal with the hard $a$-link defined. Hence if we maintain for each $a$ a separate predecessor structure where we put all nodes in the subtree with the hard $a$-links defined, we just have to perform one predecessor query. In the simplest form, the structure could be any balanced search tree, giving us a $O(\lg\sigma)$ time bound if we observe that comparing any two elements there reduces to computing the lowest common ancestor, which can be done in worst-case constant case even on dynamic trees~\cite{cole05dynamic}.

We aim to achieve better bounds, though. For this we would like to implement the predecessor structure using Prop.~\ref{thm:predecessor3}, but in order to do so we need to assign integer labels consistent with the preorder traversal to the nodes. This is exactly the POLP problem considered by Kopelowitz~\cite{Kopelot12indexing}. 

\begin{lemma}[Predecessor search on dynamic subsets of an ordered dynamic list~\cite{Kopelot12indexing}]
There exists a data structure for a dynamic ordered list of size $n$ and disjoint subsets such that:
\begin{enumerate}
\item order queries are answered in $O(1)$ worst-case time,
\item creating a new node after a given node takes $O(1)$ worst-case expected time,
\item adding an element into any subset takes $O(\lg\lg n)$ worst-case expected time,
\item predecessor search in any subset takes $O(\lg\lg n)$ worst-case expected time.
\end{enumerate}
\end{lemma}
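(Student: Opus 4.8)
The plan is to combine two classical ingredients---an order-maintenance structure for the list, and van~Emde~Boas-style predecessor structures for the subsets---and to spend the real effort on making them interact without the relabellings of the former polluting the latter. First I would keep the ordered list in an order-maintenance structure that assigns every node an integer \emph{label} from a universe $U$ of size $n^{O(1)}$ consistent with the list order, so that an order query is a single integer comparison, $O(1)$ in the deterministic worst case. To make insertions cheap one uses the standard two-level indirection: cut the list into buckets of $\Theta(\mathrm{polylog}(n))$ consecutive nodes, store only the bucket leaders in a ``global'' order-maintenance structure over $U$, and give each node inside a bucket a small ``local'' label, so that a node's full label is the pair (leader label, local label), still an integer in $U$. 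Creating a node after a given node then only inserts a local label into one bucket in $O(1)$ time; a bucket that overflows is split, which reassigns the $\Theta(\mathrm{polylog}(n))$ local labels of that bucket and performs one insertion at the global level.

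For each subset $S$ I would maintain a dynamic predecessor structure over the labels currently held by the nodes of $S$---a $y$-fast trie over $U$, which uses hashing to stay linear-size and runs in $O(\lg\lg|U|)=O(\lg\lg n)$ expected time for an insertion or a predecessor query; the ``expected'' in the statement is exactly this hashing. Since the subsets are disjoint and $\sum_S|S|=O(n)$, the total space is $O(n)$. A query for the predecessor of a node $x$ inside $S$ is then a single query to $S$'s trie, and adding a node to $S$ is a single insertion into it, which gives the claimed $O(\lg\lg n)$ bounds for the two subset operations.

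The heart of the matter, and the step I expect to be the real obstacle, is that the order-maintenance machinery relabels nodes in order to create room: a bucket split reassigns a whole bucket's local labels, and the global structure occasionally relabels bucket leaders, which changes the full label of every node in the affected buckets. Each such relabelling invalidates the corresponding keys in the subset tries and must be repaired by re-inserting the touched nodes, and a naive accounting of this repair cost makes node creation $\omega(1)$. Bringing it down to $O(1)$ worst-case expected forces a more careful design: one has to choose the bucket sizes (and, using randomization, keep leader relabellings sufficiently rare and localized) so that the expensive subset repairs are always confined to a single bucket and are charged to operations other than node creation, and one has to perform the repairs in the background---spreading the rebuild of the affected tries over the insertions that precede the next split of that bucket, and keeping the old trie live for queries until a single pointer swap completes the hand-over, exactly the deamortization device already used for the wexponential search trees in this paper. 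Once one shows that the relabelling cost never actually reaches a node-creation step, all four stated time bounds follow, and disjointness of the subsets keeps the space linear.
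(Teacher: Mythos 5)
A preliminary remark: the paper does not actually prove this lemma --- it is imported as a black box from Kopelowitz~\cite{Kopelot12indexing}, and the surrounding text only lists modifications to his structure (replacing the $y$-fast tries by the deterministic predecessor structure of Prop.~\ref{thm:predecessor3}, dropping the bucketing, exploiting disjointness of the subsets). So your attempt can only be judged against Kopelowitz's construction, not against an argument contained in this paper.

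Your architecture (order maintenance with two-level indirection, plus one $y$-fast trie per subset keyed by the current labels) is the right starting point, and you correctly isolate relabelling as the crux; but the resolution you sketch does not deliver the claimed $O(1)$ worst-case expected node creation, which is precisely the hard part. A bucket split relabels $\Theta(B)$ nodes ($B$ the bucket size), and every relabelled node lying in some subset forces an update of that subset's trie at expected cost $\Theta(\lg\lg n)$, i.e.\ $\Theta(B\lg\lg n)$ repair work per split; spreading this over the $\Theta(B)$ node creations preceding the next split of that bucket gives $\Theta(\lg\lg n)$ per creation, not $O(1)$. Charging the repairs to ``operations other than node creation'' is not available either: an adversary may perform only node creations in the affected region, yet the stale keys must still be repaired (or you must argue that queries can detect and skip stale keys, which you do not). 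Randomization also does not make leader relabellings ``rare and localized'' --- the relabelling pattern of the order-maintenance structure is deterministic; the only randomness here is the hashing inside the tries (and note that plain $y$-fast tries give \emph{amortized} expected insertion, so even the hashing needs a worst-case-expected dictionary or its own deamortization). The missing idea is to decouple the subset structures from individual labels: key each subset's trie only at the granularity of the top-level buckets (so that bottom-level relabellings never touch any trie at all), answer the within-bucket part of a predecessor query with a small local structure that can be rebuilt in time proportional to the bucket size, and only then apply the background-rebuilding device you describe. Such a two-level mirroring of the list structure inside the predecessor structures is the heart of Kopelowitz's solution; with your single-level ``trie on full labels'' design the node-creation bound degrades to $O(\lg\lg n)$ at best, so the proof as proposed does not go through.
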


We apply his method with the following modifications.

\begin{enumerate}
\item We aim to achieve deterministic time bounds, and hence need to replace $y$-fast tries by the structure from Prop.~\ref{thm:predecessor3}. In order to do so, we must verify that the following is possible: given a structure storing a set $S$, we can incrementally perform insertion into $S$ over $O(\lg\sigma)$ steps so that we can perform predecessor queries over the original $S$ while the insertion is not over yet. But this is easy if the structure is an exponential search tree, which is just a multiway search tree where the insertion is done in a bottom-up order. Whenever we replace a node $v$ by $v'$, we store the pointer to the old $v$ at $v'$ together with a timestamp indicating when we should start using the new $v'$ to answer predecessor queries. As there is just one insertion process per structure, we never need to store more than one previous version per node.
\item As the data structure of Prop.~\ref{thm:predecessor3} is of linear size, we do not need the bucketing as in the $y$-fast tries, which simplifies the structure a little bit as there is no interference between splitting of buckets and chunks.
\item All sets stored in the structures should be disjoint to guarantee that whenever a label of some element changes, we need to update just a single element in one of them. For this we define our list to contain all pairs $(u,a)$ such that $W_a(u)$ is a hard link in the natural lexicographical order, i.e., extending the preorder on the nodes.
\end{enumerate}

As a result, we get both queries and updates in deterministic $O(\frac{\lg^2\lg\sigma}{\lg\lg\lg\sigma})$ worst-case time, where by update we mean either splitting an edge or creating a hard link. This is of course assuming that the weight of each light subtree stays at most $s$. 

\subsection{Updating the Structures}
As soon as the weight of some light subtree reaches $2s$, we start the bumping process which is executed over the next $s$ updates there. As in Section~\ref{sec:amortized search}, we define the tail to be the maximal sequence of light nodes of weight at least $s$ starting at the root of the subtree (again, by weight we actually mean the weight at the moment when we started the process). We construct a list of those nodes, and rebuild the structures for all new smaller light subtrees. This takes $O(s)$ time because we defined the weight to include the number of hard links, and can be performed incrementally over $\frac{s}{2}$ updates keeping the structure for the original subtree as long as the procedure is not over yet so that we can answer queries. Then we must make all nodes on the tail heavy, which adds one edge to the tree induced by the branching heavy nodes, hence can potentially make at most one non-branching node branching.

First of all, we must set pointers to this new edge and construct its array in $O(\sigma)$ time. Then we might need to split an existing edge into two, which requires updating the pointers of all nodes there and constructing the array for the two new parts. Here we need to modify our definition of an edge slightly: we don't want its length to significantly exceed $s$. Hence we consider the tree induced by all branching nodes and some \emph{virtual branching nodes} chosen so that each edge is of length at most $\frac{3}{2}s$ and either connects at least one real branching node, or is of length at least $\frac{s}{2}$. In other words, whenever we have a long edge, we add a new virtual branching node in the middle. Now whenever we split an edge into two parts, we need to iterate over at most $s$ nodes $v_1,v_2,\ldots,v_k$. For each of the two resulting parts we need to construct the arrays and change the pointers to the current edge. This needs to be done incrementally over the remaining $\frac{s}{2}$ updates into the light subtree. The first difficulty is as soon as we modify the pointer to the current edge of some $v_i$, the array of this edge must be ready. So it seems that we should first construct the arrays, and then set the pointers. But then some new soft links could appear, and we wouldn't yet know which array should be updated. Hence we temporarily store pointers to both the old and the new edge at each node, updating both of them whenever a soft link changes or appears, and proceed in phases as follows.

\begin{enumerate}
\item Initialize each of the new arrays to contain just nulls.
\item Insert the new middle branching node into at most $\sigma$ FMA structures.
\item Iterate over $v_k,v_{k-1},\ldots,v_1$ and set the pointer to the new edge for each of them in this order. Furthermore, if $W_a(v_i)$ is defined but $W_a(v_{i+1})$ is not, modify the value in the array corresponding to the current new edge in the same time step. To quickly detect all such $a$, for each heavy node maintain a linked list, and observe that creating a new soft link $W_a(u)$ requires adding a new element to the list of $u$, and removing one element to the list of its parent.
\item Update the array corresponding to the upper part of the edge by iterating over all letters.
\end{enumerate}

We execute this procedure incrementally over a number of updates to the light subtree. Meanwhile, the queries are answered by checking both the old and the new edge. If the only thing happening were the updates to exactly one light subtree hanging off some $v_i$, we could schedule the procedure over $\Omega(s)$ of them. Unfortunately, more than one subtree might grow too large at the same (or similar) time, and furthermore new nodes can appear on the edge. In other words, new branching nodes can appear, and the length of the edge can increase, forcing us to add new virtual branching nodes. Instead of running a separate process for each new branching node, we create one master process per edge and store a queue of nodes which should be made branching. The master process runs over $\frac{s}{4}$ updates (each of them either increasing the weight of some subtree, or creating a new node on the edge), and starts as soon as we notice that the queue is nonempty or the length of the edge exceeds $s$. Then it makes a local copy of the current queue (clearing it afterwards), and splits the edge so that all nodes in the local queue are made branching and all new edges consist of at most $\frac{3}{2}s$ nodes. To achieve the latter, if the edge was of length at least $s$ before the update, we insert one new virtual branching node in the middle, and then even if up to $\frac{s}{2}$ new nodes appear, the length of each new edge is still below $\frac{3}{2}s$, but at least $\frac{s}{2}$. As a result we get a bunch of smaller edges with potentially nonempty queues, and we might need to run their master processes as soon as the first update to their parts of the tree happens. It is clear that this guarantees that the edges never become too long, but we must also verify that any node $v_i$ is made branching after at most $\frac{s}{2}$ updates to its light subtree, and this is why we have chosen the master process to run over $\frac{s}{4}$ updates. At most $\frac{s}{4}$ first updates to the light subtree are spent waiting for an already running process to finish, and then $v_i$ belongs to some queue, so the next $\frac{s}{4}$ updates will be used to make $v_i$ (and possibly some other nodes) branching.

This proves Thm.~\ref{thm:main3}.

\bibliographystyle{abbrv}
\bibliography{paper}

\end{document}